\newtheorem{Definition}{Definition}
\newtheorem{Assumption}{Assumption}
\newtheorem{theorem}{Theorem}
\newtheorem{lem}{Lemma}
\newtheorem{pro}{Proposition}
\newtheorem{cor}{Corollary}
\newtheorem{Remark}{Remark}
\newcommand{\bb}{\mathbb}
\newcommand{\diag}{\operatorname{diag}}
\newcommand{\Sp}{\operatorname{Sp}}
\newcommand{\cl}{\operatorname{cl}}
\renewcommand{\)}{\right )}
\newcommand{\R}{\mathbb{R}}
\title{\LARGE \bf
Distributed Evaluation and Convergence of \\Self-Appraisals in Social Networks}
\author{Xudong Chen, Ji Liu, M.-A. Belabbas, Zhi Xu, Tamer Ba\c sar
\thanks{Xudong Chen, Ji Liu, M.-A. Belabbas and Tamer Ba\c sar are with the Coordinated Science Laboratory, University of Illinois at Urbana-Champaign, emails:
        \{xdchen, jiliu, belabbas, basar1\}@illinois.edu.
        }
\thanks{Zhi Xu is with the Department of Electrical Engineering and Computer Science, Massachusetts Institute of Technology, email:
        zhixu@mit.edu.
       }       
\thanks{This research was supported in part by the U.S. Air Force Office of Scientific Research (AFOSR) MURI grant FA9550-10-1-0573.}%
}
\begin{document}

\maketitle
\thispagestyle{empty}
\pagestyle{empty}

\begin{abstract}
We consider in this paper a networked system of opinion dynamics in continuous time, where the agents are able to evaluate their self-appraisals in a distributed way. In the model we formulate, the underlying network topology is described by a rooted digraph. For each ordered pair of agents $(i,j)$, we assign a function of self-appraisal to agent~$i$, which measures the level of importance of agent~$i$ to agent~$j$. Thus, by communicating only with her neighbors, each agent is able to calculate the difference between her level of importance to others and others' level of importance to her. The dynamical system of self-appraisals is then designed to drive these differences to zero. We show that for almost all initial conditions, the trajectory generated by this dynamical system asymptotically converges to an equilibrium point which is exponentially stable.
\end{abstract}

\section{Introduction}

Social network is a social structure made up of actors, such as agents and organizations, and the relationships between these actors, particularly those that are neighbors to each other.
The concept of social networks is familiar to most people because of the emergence of online social networking services such as Facebook, Twitter, and Google+. Many social behaviors spread through social networks of interacting agents.
Examples are opinion dynamics \cite{DeGroot}, adoption of new technology or products \cite{rogers}, voting \cite{kuran1}, and demonstrations \cite{Morse2012}.
In this paper, we will focus on an important issue in opinion dynamics
which is about how each agent in the social network evaluates her importance in a distributed way.

Over the past decades, there has been considerable attention paid to understanding
how an agent's opinion evolves over time.
In social science, various models have been proposed to illustrate opinion dynamics.
Notable among them are the four classical models: the DeGroot model~\cite{DeGroot}, the Friedkin-Johnsen model~\cite{Johnsen}, the Hegselmann-Krause model~\cite{Krause}, and the Deffuant-Weisbuch model~\cite{dw}.
In the DeGroot model, each agent has a fixed set of neighbors and takes a convex combination of her own opinion and the opinions of her neighbors. The Friedkin-Johnsen model is a variation of the DeGroot model in which each agent  adheres to her initial opinion to a certain degree, which can be interpreted as her level of ``stubbornness". The Hegselmann-Krause model and the Deffuant-Weisbuch model define the neighbor sets in a different way: each agent takes a set of agents as her neighbors only if the opinions of these agents differ from her by no more than a certain confidence level. 
With the defined set of neighbors, each agent takes a convex combination of her own 
opinion with either (i) all of her current neighbors in the Hegselmann-Krause model, or (ii) only one of her current neighbors 
in the Deffuant-Weisbuch model.
Other important opinion models are, for example,  the Sznajd model~\cite{sznajd}, which uses the Ising spin model to describe a mechanism of making a decision in a closed community, and the voter model, which is a continuous-time Markov process defined over a lattice of integers, proposed by~\cite{voter1,voter2}.  

Recently, with the rapid expansion of large-scale, online networks,
there has been an increased interest in the analysis of opinion formation, with the objective of extending the classical models by taking into account more factors of social interactions
\cite{Acemoglu,yildiz,srikant,John,Bullo2,etesami2014game,Bullo3,Bullo1,Bullo5}.
In the work of \cite{Acemoglu} and \cite{yildiz}, the effects of the existence of stubborn agents--the agents who never update their opinions--are investigated in a randomized pair-wise updating process.
In \cite{srikant}, the opinion formation process is reformulated into a local interaction game, and the concept of stubbornness of an agent regarding her initial opinion is introduced.
The Krause model and its variations are studied in~\cite{John,Bullo2,etesami2014game,Bullo3}.
For example, a game-theoretic analysis of the Krause model is studied in \cite{etesami2014game}.
The work of \cite{Bullo3} takes into account exogenous factors, such as the influence of media, and
assumes that
each agent updates her opinion via the opinions of the population inside her ``confidence range'' and the information from an exogenous input in that range.
In the literature,
both discrete-time \cite{DeGroot,Johnsen} and continuous-time \cite{blondel,hendrickx} approaches have been adopted to model the update rule of opinions of agents.

Recently, Jia {\em et al} proposed the so-called DeGroot-Friedkin model~\cite{Bullo1,Bullo5}. This model uses the concept of reflected appraisal from sociology \cite{friedkin_reflected,reflected}, and studies the evolution of self-confidence, i.e., how confident an agent is in her opinions on a variety of issues.
Briefly speaking, reflected appraisal describes the phenomenon that agents' self-appraisals 
are influenced by the appraisals of other agents on them.
Following the work of \cite{Bullo1,Bullo5}, a modified DeGroot-Friedkin model is proposed in \cite{acc}
in which each agent updates her level of self-confidence in a more frequent manner.
Specifically, all the agents in the network update their own levels of self-confidence immediately after each time of discussion, instead of waiting for the opinion process to reach a consensus on any particular issue, which generally takes infinite time.
The analysis of the modified DeGroot-Friedkin model is carried out only in the special case
when the so-called {\it relative interaction matrix} is doubly stochastic.
Yet, a complete understanding of the system behavior for the most general case has remained open.

We introduce in this paper a continuous-time self-appraisal model whereby the agents in a social network are able to evaluate their self-appraisals in a distributed way.
For each ordered pair of agents $i$ and $j$, we assign a function of self-appraisal to
agent~$i$, which measures the level of importance of agent~$i$ to agent~$j$.
With local interaction with her neighbors only, each agent is able to calculate the difference between her level of
importance to others and others' level of importance to herself.
The proposed dynamical system of self-appraisals aims to drive these differences to zero.

{\color{black}\color{black}
Although many opinion dynamics are built on discrete-time scales,
a continuous-time model holds its merits in many ways: 
(i) a continuous-time model would be a natural choice if the  opinions of individuals evolve gradually over time, and moreover, can be used to describe the limiting behavior of a discrete-time system if the exchange of the opinions are frequent enough;  
(ii) often, the analysis of a continuous-time system would be easier to carried out than a discrete-time system;  
and (iii)  a complete analysis of a continuous-time system provides valuable insight into 
the dynamics of the discrete-time counterpart. We should note that the asymptotic behavior of the trajectories of 
the continuous-time self-appraisal model studied in this paper matches
numerical simulations of the discrete-time modified DeGroot-Friedkin model,
where the latter does not permit a thorough analysis.
}

We note here that the analysis of the self-appraisal model has been carried out earlier in our conference paper~\cite{chen2015distributed}, but for the case when the network topology is a strongly connected graph, for which the convergence of the corresponding dynamical system has been shown, with several of proofs left out.  In this paper, we generalize the result to the case when the network topology is a rooted graph, and show that under some mild assumptions, for almost all initial conditions, the trajectory generated by  the dynamical system asymptotically
converges to an equilibrium point which is exponentially stable. We provide a complete analysis, as well as proofs for establishing this result.

The remainder of the paper is organized as follows. In section~II, we describe in detail the self-appraisal model as well as the motivation behind it.
We also state the main theorem of the paper on the convergence of the self-appraisal model. In particular, the main theorem says that there is only one stable equilibrium point in the unit simplex (as the underlying space of the dynamical system), and moreover, for almost all initial conditions in the unit simplex, the trajectories of the dynamical system converge to that stable equilibrium point. Sections~III-V are devoted to establishing several properties that are needed for the proof of the main theorem. A detailed organization of these three sections will be given after the statement of Theorem~\ref{MAIN}.  We provide conclusions in the last section. The paper ends with an Appendix.

\section{The Self-Appraisal Model and Main Result}

In this section, we  introduce the continuous-time self-appraisal model, and state the main result of this paper.

\subsection{Background and notations}

 By convention, the neighbor relation among the $n$ agents in the network is  characterized by a directed graph (or digraph) $G = (V,E)$, with $V = \{1,\ldots, n\}$ the vertex set and $E$ the edge set.  
 We consider in this paper only  \emph{simple}  directed graphs, that is directed graphs with no self loops, and with at most one edge between each ordered pair of vertices. 
 Denote by $i\to j$ a directed edge of $G$ in which $i$ is the start-vertex and $j$ is the end-vertex,  and we say that $j$ is an {\it outgoing neighbor} of  $i$ and $i$ is an {\it incoming neighbor} of $j$. Denote by $V^+_i$ and $V^-_i$ the sets of incoming and outgoing neighbors of $i$, respectively.  A {\it directed path} (or simply path) of $G$ is a sequence of edges 
$$
i_1\to i_2 \to \ldots \to i_k 
$$ 
connecting vertices of $G$, and all the vertices in the path are distinct from each other.     
 A vertex~$i$ is said to be a {\bf root} if for any other vertex~$j$, there is a directed path from $j$ to $i$ in $G$. 
The digraph $G$ is said to be {\bf rooted} if it contains at least a root, and {\bf strongly connected} if each vertex of $G$ is a root. 
For a subset $V' \subset V$, we call $G'$ a subgraph of $G$ \emph{induced} by $V'$ if $G' = (V',E')$ and $E'$ contains any edge of  $E$ whose start-vertex and end-vertex are in $V'$. 
Denote by $V_r\subseteq V$ the collection of roots of $G$, and denote by $G_r = (V_r,E_r)$ the subgraph of $G$ induced by $V_r$. It is well known that the digraph $G_r$ is strongly connected. 


Denote by $\operatorname{Sp}[V]$  the $(n-1)$-simplex contained in $\mathbb{R}^n$ with vertices the standard basis vectors $e_1, \ldots, e_n \in \mathbb{R}^n$.  For $V' \subset V$, we define similarly $\operatorname{Sp}[V']$ as the convex hull of $\{e_i \mid i \in V'\}$, i.e., 
$$\operatorname{Sp}[V'] := \left\{ \sum_{i \in V'} \alpha_i e_i \mid \alpha_i \geq 0, \sum_{i \in V'} \alpha_i = 1 \right\}.$$
A point $ x\in \Sp[V]$ is said to be a {\bf boundary point} of $\Sp[V]$ if $x_i = 0$ for some $i\in V$. 
If $V'$ is a proper subset of $V$, then each $x\in \Sp[V']$ is a boundary point of $\Sp[V]$. 

For an arbitrary dynamical system $\dot{x} = f(x)$ in $\bb{R}^n$, a subset $Q\subseteq\bb{R}^n$ is said to be {\bf positive invariant} if for any initial condition $x(0)\in Q$, the solution $x(t)$ of the dynamical system is contained in $Q$ for all time $t\ge 0$.     


\subsection{The self-appraisal model}


To introduce the self-appraisal model, we first consider the following opinion consensus process, as a variation of the DeGroot model~\cite{DeGroot}, in continuous-time: Let $G= (V,E)$ be a rooted graph; each vertex $i$ of $G$ has at least one outgoing neighbor. The opinion consensus process is described by
\begin{equation}\label{artif}
\dot z_i(t) 
            = (1-x_i(t)) \left (-z_i(t) + \sum_{j\in V_i^-}c_{ij}z_j(t)\right).
\end{equation}
Each $z_i(t)$ is a real number (or vector), representing the opinion of agent~$i$ on certain ongoing issue(s) at time $t$. The  number $ x_i(t)\in [0,1]$ represents the current self-appraisal of agent~$i$ in the social network. The coefficients $c_{ij}$'s,  termed as {\it relative inter-personal weights} in~\cite{Bullo1}, are positive real numbers, satisfying the following condition:
  \begin{equation}\label{eq:summationtoone}
\sum_{j\in V^-_i} c_{ij} =1, \hspace{10pt} \forall i\in V.
\end{equation}
{\color{black} Each $c_{ij}$ can be set by agent~$i$ herself, encoding the willingness of agent~$i$ to accept the opinion of agent~$j$. Alternatively, $c_{ij}$ can be also viewed as a measure of the influence of the opinion of agent~$j$ on the opinion of agent~$i$.  }

We then recognize that system~\eqref{artif}, with condition~\eqref{eq:summationtoone}, is a continuous-time consensus process~\cite{luc04}, with the dynamics of~$z_i(t)$ 
scaled by the non-negative factor $(1-x_i(t))$.  
Note that in~\eqref{artif}, a larger value of $x_i(t)$ implies a smaller value of $\|\dot{z}_i(t)\|$. So then, $(1 - x_i(t))$ can be viewed as a measure of the total amount of opinions agent~$i$ accepts from others at time~$t$, and $c_{ij}(1 - x_i(t))$ is the corresponding portion agent~$i$  accepts from agent~$j$. 

With the opinion consensus-process~\eqref{artif} in mind, we propose the following dynamics for the evolutions of self-appraisals:
\begin{equation}\label{MODEL}
\dot x_i = -(1-x_i)x_i + \sum_{j\in V_i^+} c_{ji}(1-x_j)x_j, \hspace{10pt} \forall i\in V.  
\end{equation}
Similar to the works~\cite{Bullo1,acc}, the self-appraisals are scaled so that they sum to one, i.e., 
$$
x := (x_1,\ldots, x_n)\in \Sp[V].
$$
Note that such a scaling will be effective along the evolution; indeed, we show in the next section that $\Sp[V]$ is a positive invariant set for system~\eqref{MODEL}. 

We now   justify the self-appraisal model introduced above. From~\eqref{MODEL}, the evolution of $x_i$ is determined by two terms: $\sum_{j\in V_i^+} c_{ji}(1-x_j)x_j$ and  $(1-x_i)x_i$.  Explanations of these two terms are given below.

Each summand $c_{ji}(1-x_j)x_j$ in the first term is a product of two factors: one is $c_{ji}(1 - x_j)$ which measures the amount of opinion agent~$j$ accepts from agent~$i$ during the consensus process~\eqref{artif}, and the other factor $x_j$ is the self-appraisal of agent~$j$ which reflects the importance of agent $j$ in the network. Thus, their product $c_{ji}(1-x_j)x_j$ can be viewed as the measure of importance of agent~$i$ to agent~$j$.   
Of course, there are numerous ways of modeling these two factors, 
 the rationale behind the choice of using the product is given below.

We first consider the case where $x_j=0$; then agent~$j$ is not important at all in the social network, and thus, agent~$i$ will not increase her self-appraisal regardless of how much opinion  agent~$j$ accepts from her. On the other hand, consider the case where $x_j=1$; then from the consensus process, we see that agent~$j$ will not accept any opinion from agent~$i$. Thus, in this situation,  agent~$i$ will not increase her self-appraisal either regardless of how important agent~$j$ is in the network.  By taking these two cases into account, we  realize  that   $c_{ji}(1-x_j)x_j$ may be one of the simplest expressions that realistically  models how the neighbor $j$ affects the self-appraisal of agent~$i$. 

The summation 
$\sum_{j\in V_i^+} c_{ji}(1-x_j)x_j$ can then be viewed as the measure of importance of agent~$i$ to others. Conversely, by~\eqref{eq:summationtoone}, the other term $(1 - x_i)x_i$ can be expressed as follows:   
$$
(1-x_i)x_i = \sum_{j\in V^-_i } c_{ij} (1 - x_i)x_i,  
$$
We thus interpret $(1- x_i)x_i$ as the measure of importance of others to agent~$i$. Note that by communicating with her neighbors, each agent~$i$ is able to compute these two terms by herself. 

The self-appraisal model is then designed so that agent~$i$ measures the difference of the two terms, and drives it to zero. In particular, note that if $x^*$ is an equilibrium point of system~\eqref{MODEL}, then for each agent~$i$, we have the balance equation:
$$
(1-x^*_i)x^*_i = \sum_{j\in V_i^+} c_{ji}(1-x^*_j)x^*_j. 
$$
In other words, at an equilibrium point, the importance of agent~$i$ to others is equal to the importance of others to agent~$i$.

\subsection{The main result}
{ \color{black}
In this subsection, we state the main result of the paper. We first recall that the coefficient $c_{ij}$ can be viewed as a measure of the influence of the opinion of agent~$j$ on the opinion of agent~$i$. The larger the value of $c_{ij}$ is, the more influential is agent~$j$ on agent~$i$. 
We call agent~$j$ a {\bf dominant neighbor} of agent~$i$ if $c_{ij} > 1/2$. Note that in this case, because of~\eqref{eq:summationtoone}, we have 
$$
c_{ij} > \sum_{j'\in V^-_i - \{j\}} c_{ij'}; 
$$ 
in other words, the influence of agent~$j$  on agent~$i$ exceeds the influences of all the other neighbors of agent~$i$ together. Note that if an agent~$i$ has  agent~$j$ as its unique outgoing neighbor; then, by~\eqref{eq:summationtoone}, we have $c_{ij} = 1$, and hence agent~$j$ is a dominant neighbor of agent~$i$; indeed, in this case, agent~$i$ can only take opinions from agent~$j$. On the other hand, if agent~$i$ has at least two outgoing neighbors, then, the coefficients $c_{ij}$, for $j\in V^-_i$, can be set by agent~$i$ in a way so that there is no dominant neighbor of agent~$i$.  

We assume in this paper that agents in the network act cautiously when taking opinions from others so that there is no dominant neighbor in the network. Specifically, we assume that each vertex~$i$ of $G$ has at least two outgoing neighbors. Moreover, the coefficients $c_{ij}$, for $i\to j\in E$, are such that 
\begin{equation*}\label{eq:cijlessthanonehalf}
c_{ij} \le 1/2, \hspace{10pt} \forall i\to j \in E.
\end{equation*} 
We formalize below the following facts about the self-appraisal model: (i) There is a unique stable equilibrium point $x^* := (x^*_1,\ldots, x^*_n)$ of system~\eqref{MODEL} in $\Sp[V]$; (ii) Almost all trajectories $x(t)$ converge to $x^*$, and hence $x^*$ can be interpreted as the {\it steady state} of the self-appraisals of the agents in the network, independent of their initial conditions; (iii) The self-appraisal $x^*_i$ is zero if vertex~$i$ is not a root of the graph $G$.  Furthermore, we have $x^*_i < 1/2$, for all $i\in V$. 
This, in particular, implies that in the steady state, we have
$
x^*_i < \sum_{j\neq i} x^*_j 
$, i.e., there is no single agent whose self-appraisal is greater than or equal to the sum of the self-appraisals of the remaining agents. 
}

We now state the main result in precise terms. We first summarize the key conditions for system~\eqref{MODEL}, which will be assumed in the remainder of the paper:




\begin{Assumption}\label{asmp:keyasmp}
The digraph $G = (V,E)$ is simple and rooted, with $V_r$ the set of roots. Each vertex~$i$ of $G$ has at least two outgoing neighbors. 
The coefficients $c_{ij}$ of~\eqref{MODEL}, for $i\to j\in E$, satisfy the following conditions: 
$$\sum_{j\in V^-_i}c_{ij} = 1, \hspace{10pt} \forall i\in V,$$ 
and 
$$c_{ij} \le 1/2, \hspace{10pt} \forall i\to j \in E.$$\, 
\end{Assumption} 

We now have the main result, captured by the theorem below:

\begin{theorem}\label{MAIN}
Under Assumption~\ref{asmp:keyasmp}, the self-appraisal model~\eqref{MODEL} satisfies the following properties:
\begin{enumerate}
\item The unit simplex $\Sp[V]$ is a positive invariant set.
\item There are $(n+1)$ equilibrium points of system~\eqref{MODEL}. Each of the $n$ vertices~$e_i$ of $\Sp[V]$ is an unstable equilibrium point. The remaining equilibrium point $x^*$ lies in $\Sp[V_r]$, and satisfies the
following condition:
$$ 
0<x^*_i<1/2, \hspace{10pt} \forall i\in V_r.$$
Moreover, $x^*$ is  exponentially stable.
\item For any initial condition $x(0)$ other than a vertex of $\Sp[V]$, the solution $x(t)$ of system~\eqref{MODEL} converges to $x^*$. 
\end{enumerate}\,
\end{theorem}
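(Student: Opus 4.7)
The plan is to verify the three claims of Theorem~\ref{MAIN} in order, piggybacking on the conference paper's analysis of the strongly connected case. For positive invariance of $\Sp[V]$, I would first show $\sum_i \dot x_i = 0$ by swapping summations in $\sum_i \sum_{j \in V_i^+} c_{ji}(1-x_j)x_j$ and applying $\sum_{i \in V_j^-} c_{ji} = 1$. Nonnegativity is preserved because $\dot x_i|_{x_i = 0} = \sum_{j \in V_i^+} c_{ji}(1-x_j)x_j \ge 0$, while $x_i \le 1$ is then automatic.

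For the classification of equilibria, I would introduce $y_i := x_i(1-x_i)$, reducing the stationarity condition to $y = C^T y$ with $C = (c_{ij})$ row-stochastic. Because $G$ is rooted, $V_r$ is the unique closed recurrent class of the Markov chain with transition matrix $C$ (and $G_r$ is strongly connected), so the nonnegative solutions of $y = C^T y$ form the one-parameter family $\{\alpha \pi\}_{\alpha \ge 0}$, with $\pi$ the stationary distribution supported on $V_r$. Combined with $\sum x_i = 1$, the choice $\alpha = 0$ yields the $n$ vertex equilibria $e_i$; for $\alpha > 0$ we must have $x_i = 0$ for $i \notin V_r$ and $x_i \in (0,1)$ for $i \in V_r$. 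Taking the smaller root $x^*_i = (1-\sqrt{1-4\alpha\pi_i})/2$ and fixing $\alpha$ via $\sum_{i \in V_r} x^*_i = 1$, the refined bound $\pi_i = \sum_j \pi_j c_{ji} \le (1-\pi_i)/2$ (from $c_{ji} \le 1/2$) gives $\max_i \pi_i \le 1/3$, and coupled with the strict convex inequality $(1-\sqrt{1-4y})/2 > y$ for $y \in (0,1/4]$ this yields $\sum_{i \in V_r} x^*_i > 1$ at $\alpha^* := 1/(4\max_i \pi_i)$. By continuity and strict monotonicity, a unique $\alpha \in (0,\alpha^*)$ normalizes, producing $x^*$ with $x^*_i < 1/2$ strictly.

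For stability I would handle the two kinds of equilibria separately. For each $e_i$, linearizing in terms of $S := 1 - x_i = \sum_{k \ne i} x_k$ and using row-stochasticity collapses the perturbation equation to $\dot S = \sum_{j \ne i} x_j(1 - \mathbf{1}_{j \to i} c_{ji}) \ge S/2$, forcing exponential escape and hence instability. For $x^*$, I would first verify that $\Sp[V_r]$ is positive invariant: for $j \notin V_r$, every incoming neighbor of $j$ must lie in $V \setminus V_r$ (otherwise the rooted property yields a cycle containing $j$, forcing $j \in V_r$), so on $\Sp[V_r]$ the dynamics reduces to the strongly connected self-appraisal model of \cite{chen2015distributed}, giving exponential stability of $x^*$ within $\Sp[V_r]$. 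The full Jacobian at $x^*$ is block triangular because the $V \setminus V_r$ coordinates feed back only into themselves, so its spectrum is the union of the restricted-dynamics spectrum with that of $B := -I + D^T$, where $D$ is the substochastic submatrix of $C$ on non-roots. The rooted assumption makes $D$ transient with spectral radius strictly below $1$, so $B$ is Hurwitz; consequently the Jacobian carries the single zero eigenvalue from the conservation $\sum_i x_i = 1$ and has all other eigenvalues in the open left half plane, i.e., $x^*$ is exponentially stable in $\Sp[V]$.

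The global convergence claim is the subtlest piece. Setting $S := \sum_{k \notin V_r} x_k$, a parallel swapping-of-sums computation yields
$$
\dot S = -\sum_{k \notin V_r}(1-x_k)x_k \sum_{\substack{j \in V_r \\ k \to j}} c_{kj} \;\le\; 0,
$$
so trajectories are driven toward $\Sp[V_r]$. By LaSalle's invariance principle, the $\omega$-limit set of any trajectory lies in the union of $\Sp[V_r]$ and the unstable vertex equilibria $\{e_k : k \notin V_r\}$. The stable manifolds of those unstable vertex equilibria have positive codimension in $\Sp[V]$ by the instability argument above, so their union has measure zero; for every other initial condition the $\omega$-limit set sits in $\Sp[V_r]$, whereupon the conference paper's result collapses it to $\{x^*\}$. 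The principal obstacle I expect is this final measure-theoretic reduction: rigorously discarding the stable manifolds of all unstable vertex equilibria requires a careful application of the stable manifold theorem together with the block-triangular Jacobian structure identified above.
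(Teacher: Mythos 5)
Your Part~3 argument proves strictly less than the theorem claims, and this is a genuine gap rather than a technicality. Theorem~\ref{MAIN} asserts convergence for \emph{every} non-vertex initial condition, but discarding the stable manifolds of the vertex equilibria as a measure-zero set can only ever deliver an almost-everywhere statement. In fact the stable-manifold detour is unnecessary: the paper proves (Proposition~\ref{VR}) that each vertex is a \emph{repeller} --- there is an $\alpha = \tfrac12 - \tfrac{1}{4(2n-3)} < \tfrac12$ such that $f_i(x) < 0$ whenever $x_i \ge \alpha$ and $x \neq e_i$ --- so the stable set of each vertex is the vertex itself, and every non-vertex trajectory enters the trapping region $Q = \{x \in \Sp[V] \mid x_i \le \alpha\ \forall i\}$ in finite time (Corollary~\ref{DefQ}). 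Your own escape estimate already gives local repulsion; the ``principal obstacle'' you flag dissolves once you upgrade it to this uniform statement. A second defect in the same step: even granting $\omega(x(0)) \subseteq \Sp[V_r]$, you cannot ``collapse it to $\{x^*\}$'' merely by citing the strongly connected case, because the $\omega$-limit set is an invariant set that could a priori contain the \emph{root} vertices $e_i$, $i \in V_r$ (equilibria lying inside $\Sp[V_r]$) together with connecting orbits; your measure-zero excision only treated the non-root vertices $e_k$, $k \notin V_r$. Moreover, your identification of the largest invariant subset of $\{\dot S = 0\}$ silently requires propagating the decay from the ``bridge'' non-roots inward, i.e., exactly the induction over the layers $D_{V_r}(k)$ via Lemma~\ref{pro31lem1} that the paper carries out in Proposition~\ref{pro31} (your monotone $S$ is $1 - \phi_r$, the same Lyapunov-type function). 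The paper's route avoids all three issues: enter $Q$ (so the state is uniformly bounded away from every vertex), drive $\sum_{i \notin V_r} x_i \to 0$ (Proposition~\ref{pro31}), land in an $\epsilon$-collar of the compact set $Q_r$ (Lemma~\ref{lem:11:28am}), and finish with local exponential stability plus flow continuity over $Q_r$ (Proposition~\ref{convQr}).

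There are also two gaps in your Part~2. First, you ``take the smaller root'' without justification: for $\mu > 0$ each root coordinate solves $x_i(1 - x_i) = \mu\pi_i$ and could a priori sit on the branch $x_i = \bigl(1 + \sqrt{1 - 4\mu\pi_i}\bigr)/2 > 1/2$, so mixed-branch solutions summing to one must be excluded before you may count exactly $n+1$ equilibria; the paper rules them out by Corollary~\ref{R1} (every non-vertex equilibrium lies in $Q$, hence all coordinates are at most $\alpha < 1/2$), once again a consequence of the repeller Proposition~\ref{VR}. Second, your normalization inequality is too weak as stated: applying $\bigl(1 - \sqrt{1 - 4y}\bigr)/2 > y$ to every coordinate at $\mu^* = 1/(4\pi_{\max})$ yields only $\psi(\mu^*) > \mu^* \ge 3/4$, not $\psi(\mu^*) > 1$. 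You must split off a maximizing coordinate, which contributes exactly $1/2$, and then bound $\psi(\mu^*) > 1/2 + \mu^*(1 - \pi_{\max}) \ge 1$ using $\pi_{\max} \le 1/3$; this is in substance the concavity computation of Lemma~\ref{EQUlem2}. The remaining pieces of your proposal are sound and run parallel to the paper: your Nagumo-type invariance argument for $\Sp[V]$ is simpler than the paper's (which develops the higher-order boundary-derivative machinery of Proposition~\ref{pro2} because first derivatives can vanish identically on boundary faces, and because that machinery is needed for the invariance of the sets $\Sp[S^c_{V'}]$ and of $\Sp[V_r]$, Corollary~\ref{spvr}), and your block-triangular Jacobian analysis with the Hurwitz block $-I + D^\top$ matches the paper's treatment of $C_{22}$ in Lemma~\ref{esie}, modulo outsourcing the root block to the conference paper where Lemma~\ref{esie} gives a self-contained infinitesimally stochastic argument.
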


\begin{Remark}
Note that the cascaded system~\eqref{artif} and~\eqref{MODEL} admits a 
triangular structure (i.e., \eqref{MODEL} feeds into~\eqref{artif} but not the other way around), and hence the convergence of system~\eqref{MODEL} implies the convergence of the consensus process~\eqref{artif}. 
\end{Remark}

{ \color{black}
We  note here that the self-appraisal model~\eqref{MODEL} is similar, in its format, to the replicator dynamics (see, for example,~\cite{zhu2010heterogeneous,bomze1995lotka}),  which is also defined over the unit simplex. Yet, these are two different types of dynamical systems. Indeed, a replicator dynamics equation is given by
\begin{equation}\label{eq:replicatordynamics}
\dot x_i = x_i\(g_i(x) -   \sum^n_{j \in V}x_jg_j(x_j)\), \hspace{10pt} \forall\, i\in V.
\end{equation}
First, note that the dynamics of each $x_i$ in~\eqref{eq:replicatordynamics} depends on a global information of $x$. Yet, in the self-appraisal model~\eqref{MODEL}, the dynamics of $x_i$ depends only on $x_j$ for $j\in V^+_i$.  
Second, note that for the replicator dynamics, each subset $\Sp[V']$, for $V'\subset V$,  is a positive invariant set; indeed, from~\eqref{eq:replicatordynamics}, if $x_i = 0$, then $\dot x_i = 0$. Yet, such an invariance property does not hold in the self-appraisal model; indeed, from Theorem~\ref{MAIN}, if $\Sp[V']$ is positive invariant, then, either $V' = \{e_i\}$ for some $i\in V$, or $V'$ contains the root set $V_r$ (we prove this fact formally in the next section).   We further note that if $G$ is a strongly connected graph, then $V_r = V$, and hence from Theorem~\ref{MAIN}, there is a unique stable equilibrium point of~\eqref{MODEL} in the interior of the simplex. For the replicator dynamics, a stable equilibrium point may or may not lie in the interior of the simplex (see, for example,~\cite{bomze1995lotka}). All these facts point to the intrinsic differences between the two types of models.   

On the other hand, the self-appraisal model~\eqref{MODEL} can be viewed as a prototype of a general class of nonlinear dynamical systems defined over the unit simplex:
\begin{equation*}\label{eq:ageneralclassofsystems}
\dot x_i = - g_i(x_i) + \sum_{j\in V^+_i} c_{ji} g_j(x_j). 
\end{equation*}
with $g_i(x) \ge 0$ and $g_i(0) = g_i(1) = 0$ for all $i\in V$. Hence, 
the analysis of system~\eqref{MODEL} provided in the paper, for locating the set of equilibrium points and for establishing the global convergence, might be of independent interest. 
}

The remainder of the paper is now devoted to  establishing properties of system~\eqref{MODEL} that are needed to prove Theorem~\ref{MAIN}. 
In section~III, we focus on some basic properties of system~\eqref{MODEL}. In particular, we construct a family of subsets of the unit-simplex each of which is a positive invariant set for system~\eqref{MODEL}. In section~IV, we focus on the system behavior around a vertex of the simplex. We show that there is a closed neighborhood around each vertex in the simplex  such that if the initial condition is not the vertex, then the solution of system~\eqref{MODEL} will be away from that neighborhood after a finite amount of time. In section~V, we establish the global convergence of the self-appraisals towards the unique stable equilibrium point. 

\section{Basic Properties of the Self-Appraisal Model}

In this section, we compute time-derivatives along trajectories of system~\eqref{MODEL} at boundary points of the unit simplex. We then exhibit a family of subsets of $\Sp[V]$ (including the simplex itself), each of which is positive invariant for system~\eqref{MODEL}.

\subsection{Time-derivatives at boundary points}\label{ssec:tdbp}
In this subsection,  we assume that $x_i(t) = 0$, for some $i \in V$, and evaluate the time-derivatives $d^k x_i(t) / dt^k$, for $k \ge 1$. We first  establish in this subsection the following fact: let $d^k x_i(t) / dt^k$ be the first {\it non-vanishing} time derivative, i.e., $d^k x_i(t) / dt^k \neq 0$ and $d^l x_i(t) / dt^l = 0$ for all $l < k$; then, we have $d^k x_i(t) / dt^k > 0$. Toward that end, 
we first introduce some definitions about the graph~$G$: 

\begin{Definition}[Supporting set]
For a vertex~$i\in V$, a set  $S_i\subseteq V$ is a {\bf supporting set}, if a vertex~$j$ is in $S_i$ whenever there is a path from $j$ to $i$. The vertex~$i$ is in $S_i$ by default. 
\end{Definition}

Note that if $i$ is  a root of $G$, then $S_i = V$. Indeed, if $j\in S_i$, then so is any vertex in $V^+_j$.

Let $\gamma$ be a path in $G$, and $l(\gamma)$ denote its length. For each non-negative integer $k$, we define $S_i(k)\subseteq S_i$ as follows: if $j\in S_i(k)$, then there is a path $\gamma$ from $j$ to $i$ with $l(\gamma) \le k$. We assume that $i\in S_i(k)$ for all $k\ge 0$, and $S_i(0) = \{i\}$. By definition, we have $S_i(k-1) \subseteq S_i(k)$ for all $k\ge 1$.   Denote by $D_i(k)$ the difference between $S_i(k)$ and $S_i(k-1)$, i.e., 
\begin{equation}\label{eq:Ti(k)}
D_i(k) := S_i(k) - S_i(k-1); 
\end{equation} 
for $k = 0$, we set $D_i(0) := \{i\}$.  Note that in the case $k=1$, we have 
$S_i(1) = \{i\}\cup V^+_i$, and hence $D_i(1) = V^+_i$. 
We further note the following fact:


\begin{lem}\label{prolem25}
Let $j\in D_i(k)$ for some $k\ge 1$.  Then, $V^-_j$ intersects $S_i(k-1)$ and their intersection lies in 
$
D_i(k-1)
$.     
\end{lem}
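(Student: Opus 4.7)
The plan is to unpack the definitions and use the fact that $j \in D_i(k)$ is equivalent to saying the shortest directed path from $j$ to $i$ in $G$ has length exactly $k$. Both assertions of the lemma will follow by arguing about path lengths, with the only subtlety being that paths in this paper are required to have distinct vertices, so I will need to handle the possibility that concatenations produce repetitions.

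First, to establish that $V^-_j \cap S_i(k-1)$ is nonempty, I would pick a shortest directed path $\gamma$ from $j$ to $i$ in $G$. Since $j \in S_i(k) \setminus S_i(k-1)$, the length of $\gamma$ is exactly $k \ge 1$, so $\gamma$ has the form $j \to j' \to \cdots \to i$ for some $j' \in V^-_j$. The suffix of $\gamma$ starting at $j'$ is a path from $j'$ to $i$ of length $k-1$, so $j' \in S_i(k-1)$. This proves the intersection is nonempty.

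Second, to show that every $j' \in V^-_j \cap S_i(k-1)$ actually lies in $D_i(k-1) = S_i(k-1) \setminus S_i(k-2)$, I would argue by contradiction: suppose $j' \in S_i(k-2)$, so there exists a path $\gamma': j' = u_0 \to u_1 \to \cdots \to u_m = i$ of length $m \le k-2$. I would then build a path from $j$ to $i$ of length at most $k-1$, contradicting $j \notin S_i(k-1)$. If $j$ does not appear among the $u_l$, then $j \to u_0 \to \cdots \to u_m = i$ is a valid path (all vertices distinct) of length $m+1 \le k-1$. If $j = u_l$ for some $l \ge 1$, then the suffix $u_l \to u_{l+1} \to \cdots \to u_m$ is already a path from $j$ to $i$ of length $m-l \le k-2$, which is even shorter. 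In either case, we reach the desired contradiction, so $j' \notin S_i(k-2)$ and hence $j' \in D_i(k-1)$.

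I do not expect any serious obstacle here; the argument is entirely combinatorial. The only place one has to be mildly careful is the second step, where prepending $j$ to $\gamma'$ might not produce a simple path, but splitting into the two cases according to whether $j$ appears on $\gamma'$ immediately resolves this. The proof can be organized simply as a display of these two short paragraphs, citing only the definitions of $S_i(k)$ and $D_i(k)$ given in the paragraph preceding the lemma.
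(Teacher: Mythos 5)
Your proof is correct and follows essentially the same route as the paper's: nonemptiness of $V^-_j \cap S_i(k-1)$ via the suffix of a shortest path from $j$ to $i$, and membership in $D_i(k-1)$ by contradiction, concatenating the edge $j \to j'$ with a too-short path from $j'$ to $i$. Your case split on whether $j$ appears in $\gamma'$ is a small refinement the paper's proof glosses over (its concatenation is only guaranteed to be a walk, while the paper requires paths to have distinct vertices), so your version is, if anything, slightly more careful.
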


\begin{proof}
First note that if $j\in D_i(k)$ and $\gamma$ is a path from $j$ to $i$ of shortest length, then $l(\gamma) = k$. 
Since $j\in D_i(k)$ for $k\ge 1$, there exists at least one vertex~$j'\in V^-_j$ such that $j'\in S_i(k-1)$, because otherwise the length of any path from $j$ to $i$ is greater than $k$. Now let $j'\in V^-_j\cap S_i(k-1)$, and  $\gamma'$ be a path from $j'$ to $i$; it suffices to show that $l(\gamma') \ge (k-1)$. We prove by contradiction. Suppose that $l(\gamma')< (k-1)$;  then by concatenating $j\to j'$ with $\gamma'$, we get a path $\gamma$ from $j$ to  $i$ with  $l(\gamma)< k$, which contradicts  the fact that $j\in D_i(k)$. This completes the proof.   
\end{proof}

Let $j\in D_i(k)$, and denote by $\Gamma_{ji}(k)$ the collection of paths from $j$ to $i$ of length $k$.  Choose any path $\gamma\in\Gamma_{ji}(k)$, and write
 \begin{equation*}\label{expreg}
 \gamma = j_k \to j_{k-1} \to\ldots\to j_1 \to j_0
\end{equation*}
with $j_k = j$ and $j_0 = i$. Then, each $j_l$ lies in the intersection of $S_{i}(l)$ and $V^-_{j_{l+1}}$. Thus, by Lemma~\ref{prolem25}, we have 
$$
j_l\in D_i(l), \hspace{10pt}  \forall l = 0,\ldots,k.
$$ 
For each $\gamma\in \Gamma_{ji}(k)$, define a positive number as follows:
\begin{equation*}
\alpha_{\gamma}:= \Pi^{k}_{l=1} c_{j_l j_{l-1}}.  
\end{equation*}
We further define  
\begin{equation}\label{coeff}
\alpha_{ji} := \sum_{\gamma\in\Gamma_{ji}(k)}\alpha_{\gamma}. 
\end{equation}
Note that $\alpha_{ji}$ can be defined recursively as follows: For the base case, we assume that $j\in D_i(1) =  V^+_i$; then, $\Gamma_{ji}(1)$ is a singleton, comprised only of the edge $j\to i$, and hence $\alpha_{ji} = c_{ji}$. For the inductive step, we assume that $k >1$ and $\alpha_{j'i}$, for $j'\in D_i(k - 1)$, are well defined. Let~$j\in D_i(k)$, then  
\begin{equation}\label{eq:recursivedefinitionofalphaji}
\alpha_{ji} = \sum_{j'} c_{jj'} \alpha_{j'i}
\end{equation}
where the summation is over $V^-_j \cap D_i(k-1)$, which is nonempty by Lemma~\ref{prolem25}.  
  
To state the main result of this subsection, we further need the following definition:   

\begin{Definition}[Supporting path]
Let $x\in \Sp[V]$ with $x_i = 0$ for some $i\in V$. A path 
$$
\gamma := j_k \to \ldots \to j_1 \to i    
$$
is a {\bf supporting path} for $i$ at $x$ if
\begin{equation*}\label{LENG}
\left\{
\begin{array}{ll}
x_{j_k} >0 &  \\
x_{j_l}  = 0 & \forall l < k. 
\end{array}
\right.
\end{equation*}
If, in addition,  $k$ is the least integer among the lengths of all supporting paths for $i$ at $x$, then $\gamma$ is a {\bf critical supporting path}. 
\end{Definition}
 

It should be clear that if there exists a supporting path for~$i$ at $x$, then there will be a critical supporting path. Note that all critical supporting paths for $i$ at $x$ have the same length. We have, however, also the following fact:

\begin{lem}\label{lempro23}
Let $x\in\Sp[V]$ with $x_i = 0$. If there is no supporting path for $i$ at $x$, then $x_j = 0$ for all $j\in S_i$. 
\end{lem}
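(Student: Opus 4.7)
The plan is to prove the contrapositive of the implication: assuming that some vertex $j\in S_i$ satisfies $x_j>0$, I will explicitly construct a supporting path for $i$ at $x$. Since $x_i=0$, any such $j$ must differ from $i$, so by the very definition of the supporting set $S_i$ there exists a directed path in $G$ from $j$ to $i$.

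Fix one such path and write it as $j = j_k \to j_{k-1} \to \cdots \to j_1 \to j_0 = i$, with $k\ge 1$. Along this sequence, the coordinate $x_{j_l}$ is strictly positive at $l=k$ and vanishes at $l=0$. Let $m$ be the smallest index in $\{1,\ldots,k\}$ with $x_{j_m}>0$; this is well defined because $l=k$ is a candidate, and by minimality $x_{j_l}=0$ for every $l<m$. The truncated sub-path
$$
j_m \to j_{m-1} \to \cdots \to j_1 \to j_0 = i
$$
therefore satisfies both defining properties of a supporting path for $i$ at $x$, contradicting the hypothesis.

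I expect the main (and essentially only) subtlety to be notational rather than mathematical: one must traverse the given path starting from the $i$-end rather than from the $j$-end in order to isolate the \emph{earliest} vertex with a positive coordinate as $l$ grows. Once this truncation is identified, the verification against the definition of a supporting path is immediate, so the proof should fit on a few lines.
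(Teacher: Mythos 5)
Your proof is correct and takes essentially the same route as the paper's: both argue by contradiction (equivalently, contrapositive), pick a path from a vertex $j\in S_i$ with $x_j>0$ to $i$, and truncate it so that all vertices after the start have zero coordinate, yielding a supporting path. Your explicit choice of the minimal index $m$ just makes precise the paper's phrase ``by truncating the path if necessary.''
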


\begin{proof} We prove the result by contradiction. Suppose that there exists some $j\in S_i$ such that $x_j \neq 0$. Then, by the construction of the set $S_i$, we can choose a path $\gamma$ from $j$ to $i$:
\begin{equation*}
\gamma = j \to j_{k-1} \to \ldots \to j_1 \to i.
\end{equation*} 
By truncating the path if necessary, we can assume that $x_{j_{l}} = 0$  for all $l=1,\ldots, k-1$. Then, $\gamma$ is a supporting path for $i$ at $x$,  which is a contradiction.
\end{proof}

With the preliminaries above, we compute below the time derivatives of $x_i(t)$.   For convenience, let 
$$ x^{(k)}_i(t) := d^{k}x_i(t)/ dt^k.$$
We then have the following fact:

\begin{pro}\label{pro2}
Let the initial condition $x(0)$ of system~\eqref{MODEL} be in $\Sp[V]$, other than a vertex. Suppose that  $x_i(0) = 0$ for some $i\in V$. Then, the following properties hold:
\begin{itemize}
\item[1.] If there does not exist a supporting path for $i$ at $x(0)$, then,  $i\notin V_r$ and $x_i(t) = 0$ for all $t\ge 0$.
\item[2.] If there is a critical supporting path for $i$ at $x(0)$, and the length of the path is $k$, then,  
$$  x^{(l)}_i(0) = 0   $$
for all $l<k$, and 
$$ x^{(k)}_i(0) = \sum_{j\in D_i(k)} \alpha_{ji} (1 - x_j(0))x_j(0) > 0$$
with $D_i(k)$ defined in~\eqref{eq:Ti(k)} and $\alpha_{ji}$ defined in~\eqref{coeff}.
\end{itemize}
\end{pro}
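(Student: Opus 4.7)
The proof splits into the two stated parts.

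For Part~1, I apply Lemma~\ref{lempro23} to conclude that $x_j(0) = 0$ for every $j \in S_i$. A quick check shows that $V^+_j \subseteq S_i$ whenever $j \in S_i$ (prepending any edge $j'' \to j$ to a path $j\to\cdots\to i$ produces a path from $j''$ to $i$), so \eqref{MODEL} restricted to the coordinates indexed by $S_i$ is an autonomous subsystem that admits the origin as an equilibrium. Uniqueness of ODE solutions then forces $x_j(t) \equiv 0$ on $S_i$, and in particular $x_i(t) \equiv 0$. If $i$ were in $V_r$ we would have $S_i = V$, whence $\sum_j x_j(0) = 0$, contradicting $x(0) \in \Sp[V]$; hence $i \notin V_r$.

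For Part~2, I proceed by strong induction on the derivative order $l$, proving simultaneously, for every vertex $j$ with $x_j(0) = 0$, that $x_j^{(l)}(0) = 0$ when $l < k_j$ and that $x_j^{(k_j)}(0)$ agrees with the stated formula (here $k_j$ is the critical supporting-path length at $j$, and Part~1 handles vertices with no supporting path). The base $l = 0$ is trivial. For the inductive step, two graph-theoretic facts drive the computation: any $j' \in V^+_j$ with $x_{j'}(0) > 0$ would make $j' \to j$ a supporting path of length~$1$, so $k_j \ge 2$ forces $x_{j'}(0) = 0$ for every $j' \in V^+_j$; and concatenating a critical supporting path for $j'$ with the edge $j' \to j$ produces a supporting path for $j$, yielding $k_{j'} \ge k_j - 1$.

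Writing \eqref{MODEL} as $\dot x_j = -x_j + x_j^2 + \sum_{j' \in V^+_j} c_{j'j}(x_{j'} - x_{j'}^2)$ and differentiating $l - 1$ more times, every Leibniz product arising from the quadratic terms $x_j^2$ and $x_{j'}^2$ vanishes at $t = 0$, since the bound $k_{j'} \ge k_j - 1$ makes the combined order of vanishing of any two factors strictly exceed $l - 1$. This reduces the computation to
$$
x_j^{(l)}(0) = \sum_{j' \in V^+_j} c_{j'j}\, x_{j'}^{(l-1)}(0).
$$
For $l < k_j$ the inductive hypothesis yields $x_{j'}^{(l-1)}(0) = 0$, so $x_j^{(l)}(0) = 0$. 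For $l = k_j$ only the $j'$ with $k_{j'} = l - 1$ contribute, and substituting the inductive expression for $x_{j'}^{(l-1)}(0)$ rewrites $x_j^{(l)}(0)$ as a double sum indexed by $j' \in V^+_j$ and $j'' \in D_{j'}(l-1)$.

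The main obstacle is to collapse this double sum into the claimed sum over $j'' \in D_j(l)$, via the dual form of the recursion~\eqref{eq:recursivedefinitionofalphaji},
$$
\alpha_{j''j} \;=\; \sum_{\substack{j' \in V^+_j \\ j'' \in D_{j'}(l-1)}} c_{j'j}\,\alpha_{j''j'}, \qquad j'' \in D_j(l),
$$
obtained by decomposing each path in $\Gamma_{j''j}(l)$ by its last edge. The subtle point is to verify, by a short path-truncation argument that exploits $k_j = l$, that whenever $x_{j''}(0) > 0$ the side condition $k_{j'} = l - 1$ is automatic (so the two indexing sets coincide) and that such a $j''$ indeed lies in $D_j(l)$ rather than in $S_j(l-1)$. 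Strict positivity $x_j^{(l)}(0) > 0$ then follows because at least one $j'' \in D_j(l)$ is the head of a critical supporting path and hence satisfies $x_{j''}(0) \in (0,1)$.
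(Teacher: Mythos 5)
Your Part~1 coincides with the paper's argument: Lemma~\ref{lempro23} gives $x_j(0)=0$ on all of $S_i$, the coordinates indexed by $S_i$ form an isolated subsystem whose unique solution from the origin is identically zero, and the same contradiction rules out $i\in V_r$. Your Part~2, however, takes a genuinely different route. The paper proves a pointwise-independent structural result (Lemma~\ref{prolem24} in the Appendix): by induction on $k$ for the fixed vertex $i$, it shows that $x_i^{(k)}(0)$ is a degree-$(k+1)$ polynomial without constant term in the variables $x_j(0)$, $j\in S_i(k)$, whose only monomials purely in the frontier variables $D_i(k)$ are $x_j(0)$ and $x_j^2(0)$, with coefficients $\alpha_{ji}$ and $-\alpha_{ji}$ tracked via the first-edge decomposition~\eqref{eq:recursivedefinitionofalphaji}; Proposition~\ref{pro2} then follows by evaluating this identity at a point where all coordinates in $S_i(k-1)$ vanish. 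You instead fix the initial condition, induct on the derivative order simultaneously over all vertices, keep track of the point-dependent critical lengths $k_j$, and collapse the resulting double sum using the last-edge (dual) decomposition of $\alpha_{j''j}$. The two side conditions you flag --- that $k_{j'}=l-1$ is automatic whenever $x_{j''}(0)>0$, and that such $j''$ lies in $D_j(l)$ rather than $S_j(l-1)$ --- are exactly the extra point-dependent facts your route must pay for, and your truncation arguments (cut a shortest path at the vertex nearest $j$ with positive coordinate) do establish them, with the one caveat that when concatenating a critical supporting path for $j'$ with $j'\to j$ you should note that if that path already passes through $j$, its prefix ending at $j$ gives an even shorter supporting path, so $k_{j'}\ge k_j-1$ still holds. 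The paper's approach buys immunity from all these subtleties, since its polynomial identity holds for arbitrary $x(0)\in\Sp[V]$, at the cost of heavier monomial bookkeeping; yours is leaner but tied to the particular initial condition.

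One boundary case is wrong as literally stated: when $l=k_j=1$, the claim that every Leibniz product from the quadratic terms vanishes fails, because an incoming neighbor $j'$ may (indeed, for some $j'$, must) have $x_{j'}(0)>0$; your reduction would then give $x_j^{(1)}(0)=\sum_{j'}c_{j'j}x_{j'}(0)$, which is incorrect --- the surviving terms $-c_{j'j}x_{j'}^2(0)$ are precisely what produce the factor $(1-x_{j'}(0))x_{j'}(0)$ in the statement. The fix is immediate: treat $l=1$ as the true base case by direct substitution in~\eqref{MODEL}, using $D_j(1)=V^+_j$ and $\alpha_{j'j}=c_{j'j}$; for $l\ge 2$ your vanishing-order argument is sound, since it is exactly there that $k_j\ge 2$ forces $x_{j'}(0)=0$ for all $j'\in V^+_j$. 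Relatedly, your displayed reduction silently drops the term $-x_j^{(l-1)}(0)$; it vanishes by your inductive hypothesis since $l-1<k_j$, but this should be said explicitly.
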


We refer to the Appendix for a proof of Proposition~\ref{pro2}.

\subsection{Positive invariant sets}

Now,  
for each subset $V'\subseteq V$, we define 
\begin{equation}\label{Eq:svunionsi}
S_{V'} := \cup_{i\in V'} S_i. 
\end{equation} 
If $V'$ is empty, then so is $S_{V'}$. Note that if $V'$ contains a root of $G$, then $S_{V'} = V$.  Denote by $S^c_{V'}$  the complement of $S_{V'}$ in $V$, i.e., 
$$
S^c_{V'}:= V - S_{V'}.
$$
Then, the set $\Sp[S^c_{V'}]$ can be described by
$$
\Sp[S^c_{V'}] = \{x\in \Sp[V]\mid x_i = 0, \forall i\in S_{V'}\}.
$$
We next establish the following fact: 

\begin{pro}\label{POSINV}
Let $V'$ be any subset of $V$. Then, $\Sp[S^c_{V'}]$ is a positive invariant set for system~\eqref{MODEL}. 
\end{pro}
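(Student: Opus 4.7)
The plan is to split the positive-invariance claim into two pieces: trajectories starting in $\Sp[S^c_{V'}]$ stay inside $\Sp[V]$, and the coordinates indexed by $S_{V'}$ stay at zero. Both will rest on Proposition~\ref{pro2}, linked by a single structural observation: $S_{V'}$ is closed under taking incoming neighbors, i.e.\ if $i \in S_{V'}$ and $j \to i$ is an edge, then $j \in S_{V'}$. Indeed, $i \in S_{V'}$ furnishes some $i_0 \in V'$ with a directed path from $i$ to $i_0$; prepending $j \to i$ gives a walk from $j$ to $i_0$, which reduces to a directed path and puts $j$ in $S_{i_0} \subseteq S_{V'}$.

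From this, for any non-vertex $x(0) \in \Sp[S^c_{V'}]$ and any $i \in S_{V'}$, no supporting path for $i$ at $x(0)$ can exist: its starting vertex $j_k$ would admit a walk to some $i_0 \in V'$ (the supporting path followed by the $i$-to-$i_0$ path), forcing $j_k \in S_{V'}$ and hence $x_{j_k}(0) = 0$, which contradicts the defining condition $x_{j_k}(0) > 0$. Proposition~\ref{pro2} (item~1) then gives $x_i(t) = 0$ for all $t \ge 0$ and all $i \in S_{V'}$. If instead $x(0)$ is a vertex $e_k$, then $e_k \in \Sp[S^c_{V'}]$ requires $k \notin S_{V'}$, and a direct substitution into~\eqref{MODEL} confirms that every $e_k$ is an equilibrium, handling this case trivially.

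For the simplex-invariance piece, conservation of $\sum_i x_i$ is immediate: swapping the double summation order in the second term of~\eqref{MODEL} and invoking~\eqref{eq:summationtoone} yields $\sum_i \dot x_i = 0$. Non-negativity is the more technical step, and I would handle it by a boundary-exit argument. Set $t^* := \sup\{t \ge 0 : x(s) \in \Sp[V] \text{ for all } s \in [0,t]\}$ and assume for contradiction that $t^* < \infty$. Continuity places $x(t^*) \in \partial \Sp[V]$, so some coordinate vanishes at $t^*$; re-initializing Proposition~\ref{pro2} at $t^*$ shows that each vanishing coordinate either stays at zero (item~1) or has a strictly positive first nonvanishing time derivative (item~2), so all coordinates remain non-negative on a right-neighborhood of $t^*$. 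Together with the preserved sum this keeps $x$ inside $\Sp[V]$ past $t^*$, contradicting the definition of $t^*$. The delicate point to watch is this re-initialization, since Proposition~\ref{pro2} excludes vertex initial conditions; however, if $x(t^*)$ is a vertex then it is an equilibrium, the trajectory is constant, and the argument is immediate.
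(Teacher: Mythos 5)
Your proof is correct and follows essentially the same route as the paper's: you invoke Proposition~\ref{pro2} (item~1) to pin the coordinates indexed by $S_{V'}$ at zero, verify conservation of $\sum_i x_i$ by the same summation swap, and rule out a coordinate going negative via the same dichotomy (no supporting path implies the coordinate stays at zero; otherwise the first non-vanishing derivative is positive). Your only deviations are refinements in rigor rather than substance: the first-exit-time formulation of the contradiction in place of the paper's ``zero at $t$, negative just after'' setup, and the explicit handling of vertex initial conditions (which Proposition~\ref{pro2} excludes and the paper's proof silently glosses over), both of which are welcome but do not change the argument.
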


\begin{proof}
First, note that if $x(0)\in \Sp[S^c_{V'}]$,  then there is no supporting path for $i$ at $x(0)$ for all $i\in S_{V'}$.  Then, by Proposition~\ref{pro2}, we have
\begin{equation*}
x_i(t) = 0, \hspace{10pt} \forall t \ge 0 \mbox{ and } \forall i\in S_{V'}.
\end{equation*}
Next, we show that $\sum^n_{i=1}x_i$ is invariant along the evolution. To see this, note that
\begin{equation*}
\begin{array}{lll}
\sum^n_{i=1} (1 - x_i) x_i & = & \sum^n_{i=1} \sum_{j\in V^-_i} c_{ij} (1 - x_i) x_i\\
& = & \sum^n_{i=1}\sum_{j\in V^+_i} c_{ji} (1 - x_j) x_j. 
\end{array}
\end{equation*}
Hence, we have 
\begin{equation*}
\sum^n_{i=1} \dot{x}_i = \sum^n_{i=1}\left(\sum_{j\in V^+_i} c_{ji} (1 - x_j) x_j  - (1-x_i)x_i\right ) = 0.
\end{equation*}
Thus, $\sum^n_{i=1}x_i(t) = 1$ for $t \ge 0$. 
It now suffices to show that
\begin{equation*}\label{Eq:posinv}
x_i(t) \ge 0, \hspace{10pt} \forall t\ge 0 \hspace{5pt}\mbox{ and } \hspace{5pt}\forall i\in V.
\end{equation*}
We prove this by contradiction. Suppose that, to the contrary, there exists a vertex~$i\in V$, an instant~$t>0$ and an $\epsilon > 0$ such that $x_i(t) = 0$ and
\begin{equation}\label{eq:outofsimplex}
x_i(t') < 0, \hspace{10pt} \forall\, t' \in (t, t+\epsilon).
\end{equation}
There are two cases: 

{\it Case I}.  
Suppose that there does not exist a supporting path for~$i$ at $x(t)$; then, by Proposition~\ref{pro2}, $x_i(t') = 0$ for all $t' \ge t$, which contradicts~\eqref{eq:outofsimplex}. 

{\it Case II}.
Suppose that there is a supporting path for~$i$ at $x(t)$; then, appealing again to Proposition~\ref{pro2}, we have that there exists a~$k$ such that $x^{(k)}_i(t) > 0$ and $x^{(l)}_i(t) = 0$
for all $l < k$. This, in particular, implies that there is an $\epsilon' > 0$ such that $x_i(t') > 0 $ for all $t' \in (t, t+ \epsilon')$, which contradicts~\eqref{eq:outofsimplex}. This completes the proof.  
\end{proof}

We state below some implications of Proposition~\ref{POSINV}. First, let $V' $ be an empty set; then $S^c_{V'} = V$. Thus, we have the following fact as an immediate consequence of Proposition~\ref{POSINV}: 
\begin{cor}\label{spvvv}
The unit simplex $\Sp[V]$ is a positive invariant set for system~\eqref{MODEL}. 
\end{cor}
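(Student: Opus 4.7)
The plan is to obtain this corollary as a one-line specialization of Proposition~\ref{POSINV}. Recall that Proposition~\ref{POSINV} asserts that $\Sp[S^c_{V'}]$ is positive invariant for system~\eqref{MODEL} for every subset $V'\subseteq V$. The idea is simply to instantiate this statement at $V' = \emptyset$ and observe that $\Sp[S^c_{V'}]$ collapses to the entire simplex~$\Sp[V]$.

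Concretely, first I would invoke the definition $S_{V'} := \cup_{i\in V'}S_i$ from~\eqref{Eq:svunionsi} and note that if $V'$ is empty then the union is empty, so $S_{V'} = \emptyset$. Second, I would compute the complement in $V$ to get $S^c_{V'} = V - \emptyset = V$, which in turn gives $\Sp[S^c_{V'}] = \Sp[V]$. Finally, applying Proposition~\ref{POSINV} with this choice of $V'$ yields exactly that $\Sp[V]$ is a positive invariant set, which is the claim.

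There is no real obstacle here: all the work has already been carried out inside Proposition~\ref{POSINV}. In particular, the nontrivial part of that proof, which relied on Proposition~\ref{pro2} to rule out trajectories escaping through a boundary face $\{x_i = 0\}$, together with the easy conservation identity $\sum_i \dot x_i = 0$ that keeps $\sum_i x_i(t) \equiv 1$, jointly handle both defining constraints of $\Sp[V]$. The corollary is therefore just a remark isolating the $V' = \emptyset$ case for convenient reference later in the paper.
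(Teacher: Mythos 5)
Your proposal is correct and coincides with the paper's own proof: the paper likewise obtains the corollary by taking $V'=\emptyset$, noting $S^c_{V'}=V$, and invoking Proposition~\ref{POSINV}. Your additional remarks about where the actual work lies (Proposition~\ref{pro2} and the conservation of $\sum_i x_i$) are accurate but not needed for the specialization itself.
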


Recall that $V_r$ is the root set of $G$; we establish the following fact as another corollary to Proposition~\ref{POSINV}:
 
\begin{cor}\label{spvr}
The subset $\Sp[V_r]$ of $\Sp[V]$ is a positive invariant set for system~\eqref{MODEL}.
\end{cor}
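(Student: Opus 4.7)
The plan is to obtain Corollary~\ref{spvr} as an immediate specialization of Proposition~\ref{POSINV}. Specifically, I would apply that proposition with the particular choice $V' := V \setminus V_r$, and then reduce the claim to verifying the set-theoretic identity $S^c_{V'} = V_r$, equivalently $S_{V'} = V \setminus V_r$. Once this identity is established, Proposition~\ref{POSINV} directly yields that $\Sp[V_r] = \Sp[S^c_{V'}]$ is a positive invariant set of~\eqref{MODEL}.

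To establish $S_{V'} = V \setminus V_r$, one inclusion is trivial: for every $i \in V' = V \setminus V_r$ we have $i \in S_i \subseteq S_{V'}$, so $V \setminus V_r \subseteq S_{V'}$. For the reverse inclusion $S_{V'} \subseteq V \setminus V_r$, I would argue by contradiction. Suppose some $j \in S_{V'}$ were in $V_r$. By the definition~\eqref{Eq:svunionsi} of $S_{V'}$, there exists a non-root $i \in V'$ with $j \in S_i$; the case $j = i$ is impossible since $j \in V_r$ and $i \notin V_r$, so we have an honest directed path from $j$ to $i$ in $G$. Because $j$ is a root, every vertex $k \in V$ admits a directed path to $j$; concatenating any such path with the path from $j$ to $i$ produces a directed path from $k$ to $i$ for every $k \in V$. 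This forces $i$ to be a root, contradicting $i \in V \setminus V_r$. Hence $j \in V \setminus V_r$, completing the verification.

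The argument therefore reduces to the single graph-theoretic observation that a root cannot reach a non-root, which is a one-line consequence of the definition of "root" given in Section~II. I do not anticipate any real obstacle; the entire proof should be a few lines long once the right choice of $V'$ is made.
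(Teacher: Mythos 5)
Your proposal is correct and takes essentially the same route as the paper: the paper's proof likewise sets $V' := V - V_r$, reduces the claim to the identity $S^c_{V'} = V_r$ so that Proposition~\ref{POSINV} applies, and closes with the same contradiction argument that every vertex reachable from a root is itself a root (the paper phrases it as ``each vertex of $\gamma$ is a root''). The only cosmetic caveat is that your concatenation of a path from $k$ to $j$ with the path from $j$ to $i$ is a priori a walk rather than a path, so one should extract a path from it---a detail the paper glosses over in exactly the same way.
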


\begin{proof}
Let $V' := V - V_r$; we prove the result by  showing that  $S^c_{V'} = V_r$, and then, appealing to Proposition~\ref{POSINV}. It should be clear that $S^c_{V'} \subseteq V_r$, and we prove that $S^c_{V'} \supseteq V_r$. 
The proof is carried out by contradiction. Suppose that there is a root~$i\in S_{V'}$. From~\eqref{Eq:svunionsi}, we know that there is a vertex~$j\in V'$, together with a path $\gamma$ from~$i$ to~$j$. But then, since $i$ is a root, each vertex of $\gamma$ is a root. In particular, we have that $j\in V_r$,  which contradicts the fact that $j\in V'$. This completes the proof.  
\end{proof}






\section{Vertices Are Repellers}
In this section, we focus on the system behavior around a vertex of the simplex. First, observe that each vertex $e_i\in \R^n$ of the simplex is an equilibrium point of system~\eqref{MODEL}. We show below that each $e_i$ is unstable, and in fact, it is a {\it repeller}. To this end, 
we choose an $\epsilon\in (0,1)$, and define $P_i(\epsilon)\subset \Sp[V]$ as follows:
\begin{equation*}\label{Vi}
P_i(\epsilon) := \big \{x\in \Sp[V]\mid x_i \ge \epsilon \big \}.
\end{equation*}  
It should be clear that each $P_i(\epsilon)$ is a closed neighborhood of $e_i$ in $\Sp[V]$, and 
$$ P_i(\epsilon) \supsetneq P_i(\epsilon'), \hspace{10pt}\mbox{if } \epsilon < \epsilon'. $$
Denote by $f(x)$ the vector field of system~\eqref{MODEL}, and by $f_i(x)$ the $i$-th component of $f(x)$, i.e., 
\begin{equation}\label{eq:fi(x)}
f_i(x):= -(1-x_i)x_i + \sum_{j\in V^+_i}c_{ji} (1-x_j)x_j.
\end{equation}
We next establish the following fact:

\begin{pro}\label{VR}
 Suppose that for a vertex $i\in V$, there exists $\epsilon\in (0,1)$  such that  
$
c_{ji} \le \epsilon
$ 
for all $j\in V^+_i$.  Then, there exists an $\alpha_i\in (0,\epsilon)$ such that 
$
f_i(x) < 0 
$
for all $x\in P_i(\alpha_i) - \{e_i\}$. 

\end{pro}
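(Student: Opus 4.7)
The plan is to reduce the claim to a one-variable inequality by working in the slack coordinate $s := 1-x_i \in [0,1]$, which vanishes exactly at $e_i$ and, because $x\in\Sp[V]$, satisfies $s = \sum_{j\ne i} x_j$. In this coordinate the drift term becomes $-(1-x_i)x_i = -s(1-s)$, so the entire question reduces to controlling the positive sum in~\eqref{eq:fi(x)} by a quantity depending only on $s$.

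The central estimate uses three ingredients: the trivial bound $(1-x_j)\le 1$; the hypothesis $c_{ji}\le \epsilon$ for every $j\in V^+_i$; and the simplex identity $\sum_{j\ne i} x_j = s$. Chained together they give
$$\sum_{j\in V^+_i} c_{ji}(1-x_j)x_j \;\le\; \epsilon\sum_{j\in V^+_i} x_j \;\le\; \epsilon s,$$
which yields the one-variable bound
$$f_i(x)\;\le\;-s(1-s)+\epsilon s\;=\;s\bigl(s-(1-\epsilon)\bigr).$$
Since $s>0$ is equivalent to $x\ne e_i$, the right-hand side is strictly negative precisely when $s<1-\epsilon$, i.e., whenever $x_i>\epsilon$.

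Finally, to convert this pointwise bound into the claimed statement about the closed neighborhood $P_i(\alpha_i)$, one picks $\alpha_i$ so that $x\in P_i(\alpha_i)-\{e_i\}$ forces $0<s\le 1-\alpha_i<1-\epsilon$ (for instance $\alpha_i := (1+\epsilon)/2$ does the job), and the displayed inequality then delivers $f_i(x)<0$. No real obstacle arises; the only delicate point is packaging the simplex constraint to collapse the sum, whose weights $c_{ji}$ for $j\in V^+_i$ are a priori unbounded in total, down to a single factor of $s$. Once this is done, the conclusion follows from a one-variable quadratic inequality.
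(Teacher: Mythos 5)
There is a genuine gap: the proposition asks for $\alpha_i \in (0,\epsilon)$, but your choice $\alpha_i := (1+\epsilon)/2$ lies in $(\epsilon,1)$, so you have proved a strictly weaker statement. Moreover, this is not fixable within your estimate: the bound $(1-x_j)\le 1$ gives $f_i(x)\le s\bigl(s-(1-\epsilon)\bigr)$ with $s=1-x_i$, which at $x_i=\epsilon$ yields only $f_i(x)\le 0$ and for $x_i<\epsilon$ yields a positive upper bound, so no threshold $\alpha_i\le\epsilon$ can ever come out of it. The paper's proof recovers the missing slack precisely by not throwing away the factors $(1-x_j)$: it maximizes the strictly concave function $\sum_{j\in V^+_i}(1-x_j)x_j$ subject to $\sum_{j\in V^+_i}x_j\le 1-x_i$, $x_j\ge 0$, obtaining the maximum $(1-x_i)\bigl(1-(1-x_i)/d_i\bigr)$ with $d_i=|V^+_i|$ (equivalently $s-s^2/d_i$ rather than your $s$). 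The quadratic correction $\epsilon s^2/d_i$, which is exactly what your bound discards, pushes the estimate to
$$
f_i(x)\le -\left(1-\frac{\epsilon}{d_i}\right)(1-x_i)\left(x_i-\epsilon\,\frac{d_i-1}{d_i-\epsilon}\right),
$$
whose zero-crossing $\epsilon(d_i-1)/(d_i-\epsilon)$ is \emph{strictly below} $\epsilon$; the paper then takes $\alpha_i=\epsilon-\tfrac12\,\epsilon(1-\epsilon)/(d_i-\epsilon)\in(0,\epsilon)$. (The case $V^+_i=\emptyset$, where $f_i(x)=-(1-x_i)x_i$, is handled separately and is harmless in both arguments.)

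The requirement $\alpha_i<\epsilon$ is not cosmetic, so the weakening matters downstream. With $\epsilon=1/2$ the paper needs $\alpha_i<1/2$, hence $\alpha=\tfrac12-\tfrac{1}{4(2n-3)}<\tfrac12$, to define $Q=\{x\in\Sp[V]\mid x_i\le\alpha\}$: this is what gives $x^*_i\le\alpha<1/2$ for any non-vertex equilibrium (Corollary~\ref{R1} and Lemma~\ref{EQUlem2}, where the branch $x^*_i=\tfrac12(1-\sqrt{1-4\mu v_i})$ is selected), the positive scaling $1-2x_i(t)\ge 1-2\alpha>0$ in the consensus reparametrization of Proposition~\ref{convQr}, and the invertibility of $x\mapsto(1-x)x$ on $[0,\alpha]$. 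Your $\alpha_i=3/4$ would make all of these fail. To repair your proof you must replace $(1-x_j)\le 1$ by the constrained maximization (or an equivalent bound such as $\sum_{j\in V^+_i}x_j^2\ge s^2/d_i$ via Cauchy--Schwarz when the mass outside $i$ sits on $V^+_i$) and then verify $\alpha_i<\epsilon$ explicitly.
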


\begin{Remark}
We recall that $c_{ji}$ measures the influence of agent~$i$ on agent~$j$. Proposition~\ref{VR} then says that if the influence of agent~$i$ on each of her incoming neighbors is less than $\epsilon$, then the self-appraisal $x_i(t)$ of agent~$i$ decreases along the evolution and stays below~$\epsilon$.     
\end{Remark}

\begin{proof}[Proof of Proposition~\ref{VR}]
First, note that if vertex $i$ does not have any incoming neighbor, then
$$
f_i(x) = -x_i(1- x_i), 
$$
and hence, $f_i(x) < 0$ as long as $x_i \in (0,1)$. Hence, in this case, Proposition~\ref{VR} holds for any $\alpha_i\in (0,\epsilon)$. We thus assume that $V^+_i$ is nonempty for the rest of the proof. 

To prove the result, we first apply the condition that $c_{ji}\le \epsilon$, and obtain
\begin{equation}\label{f1}
\begin{array}{lll}
f_i(x)  \le  - (1 - x_i)x_i + \epsilon \sum_{j\in V^+_i}(1-x_j)x_j.
\end{array}
\end{equation}
Now, fix the value of $x_i$, and consider the following optimization problem:
\begin{equation*}
\begin{array}{l}
\max \sum_{j\in V^+_i} (1 - x_j)x_j \vspace{3pt}\\
\text{s.t. } \sum_{j\in V^+_i} x_j \le 1 - x_i  \hspace{5pt} 
\text{ and }  \hspace{5pt} x_j \ge 0.
\end{array}
\end{equation*} 
Since the function $\sum_{j\in V^+_i} (1 - x_j)x_j $ is strictly concave, the maximum is achieved uniquely at 
\begin{equation*}
x_{j} = (1 - x_i)/d_i, \hspace{10pt} \forall j\in V^+_i
\end{equation*}
where $d_i$ is the cardinality of $V^+_i$. 
Thus,  
\begin{equation}\label{opt}
\max \sum_{j\in V^+_i} (1 - x_j)x_j = (1 - x_i)(1 - (1 - x_i)/d_i).
\end{equation} 
We then combine~\eqref{f1} and~\eqref{opt}, and get
\begin{equation*}
f_i(x) \le -(1 - {\epsilon}/{d_i} )(1 - x_i) \left(x_i - \epsilon \,\frac{d_i - 1}{d_i - \epsilon}\right).
\end{equation*}
For the first term on the right hand side of the expression, we have
$
(1 - \epsilon/d_i)>0
$. For the second, we have that if $x\neq e_i$, then $(1 - x_i) > 0$. For the last term, we let
\begin{equation}\label{prealpha}
\alpha_i:= \epsilon -  \frac{1}{2} \frac{\epsilon(1 - \epsilon)}{ d_i - \epsilon}.
\end{equation}
Then, we have
\begin{equation*}
x_i - \epsilon\, \frac{d_i - 1}{d_i - \epsilon}   
\ge \frac{1}{2}\frac{\epsilon(1 - \epsilon)}{d_i - \epsilon} >0, \hspace{10pt} \forall x_i \ge \alpha_i.
\end{equation*}
Combining these facts, we arrive at the result that   
$
f_i(x) < 0$, for all  $x\in P_i(\alpha_i) - \{e_i\} 
$. 
\end{proof}

Recall that $c_{ij}\le 1/2$ for all $i\to j\in E$. Thus, if we let $\epsilon = 1/2$, then following~\eqref{prealpha}, we have
\begin{equation*}
\alpha_i= \frac{1}{2} - \frac{1}{4(2d_i - 1)} \le \frac{1}{2} - \frac{1}{4(2n - 3)}.
\end{equation*}
The equality holds if and only if $V^+_i = V - \{i\}$, and hence $d_i = n-1$.  
For convenience, let $\alpha$ be defined as follows: 
\begin{equation}\label{alpha}
\alpha := \frac{1}{2}  - \frac{1}{4(2n - 3)}.
\end{equation}
It is the sharp upper-bound for all $\alpha_i$'s.  This number $\alpha$ will be fixed for the remainder of the paper. 

Following Proposition~\ref{VR},  we now describe some relevant properties of system~\eqref{MODEL}.  First, we define a subset $Q\subset \Sp[V]$ as follows:
\begin{equation}\label{eqDefQ}
Q: =\{x\in \Sp[V]\mid x_i\le \alpha,\, \forall i\in V\}.
\end{equation}
Note that $Q$ can be expressed as follows
\begin{equation}\label{reexDefq}
Q = \bigcap_{i\in V} \cl \big(\Sp[V] - P_i(\alpha)\big)
\end{equation}
where $ \operatorname{cl}(\cdot)$ denotes the closure of a set in $\bb{R}^n$. An illustration of $Q$ in the case $n = 3$ is given in Figure~\ref{Pep}.

\begin{figure}[h]
\begin{center}
\includegraphics[scale=.3]{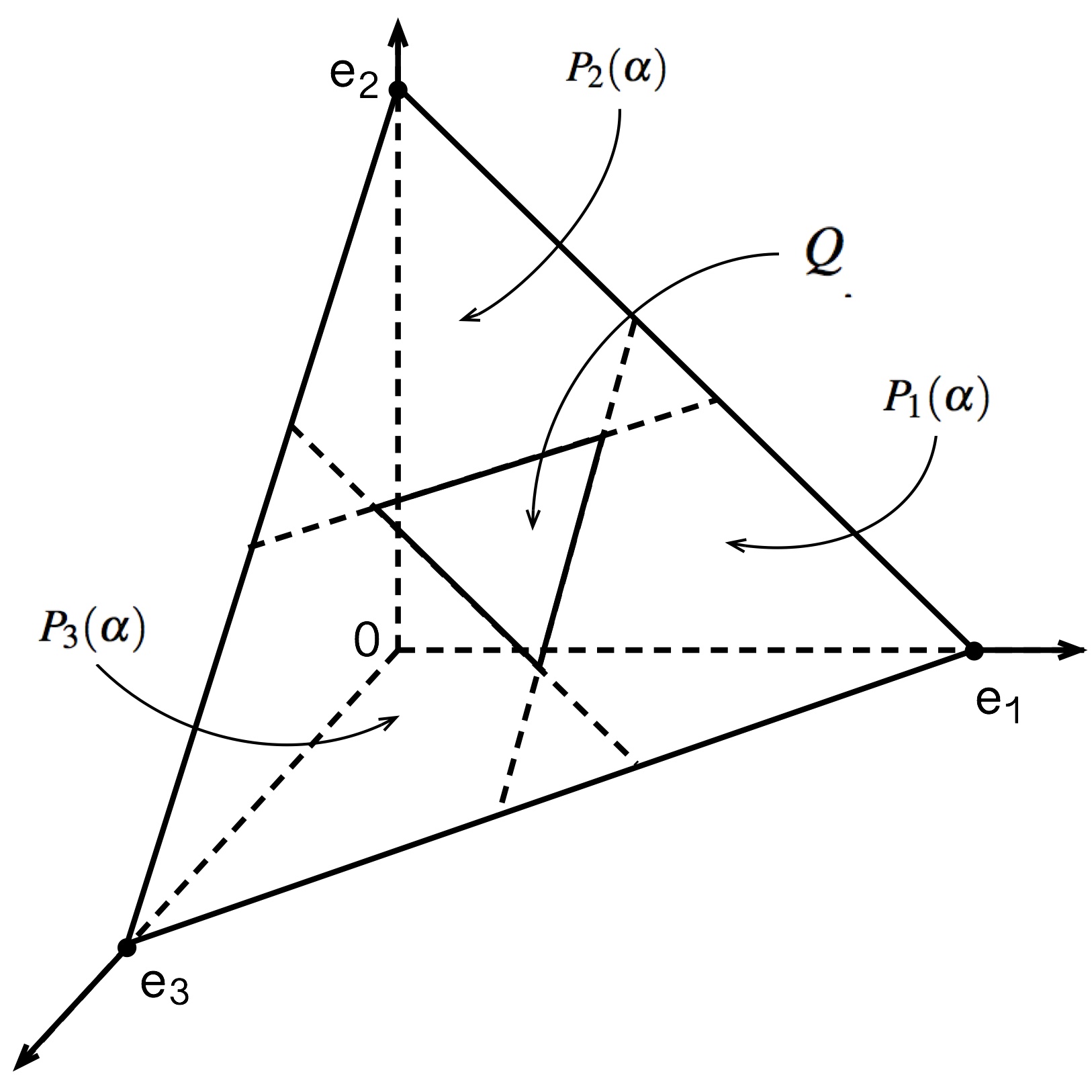}
\caption{Illustration of how $P_i(\alpha)$'s and $Q$ are defined and how these sets are distributed over the simplex in the case $n=3$.}
\label{Pep}
\end{center}
\end{figure}

We obtain some results as corollaries to Proposition~\ref{VR}:

\begin{cor}\label{R1}
If $x$ is a non-vertex equilibrium point of system~\eqref{MODEL}, then $x\in Q$.  
\end{cor}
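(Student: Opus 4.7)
The plan is to derive Corollary 1 as a direct consequence of Proposition~\ref{VR}, exploiting the fact that under Assumption~\ref{asmp:keyasmp} we have $c_{ji} \le 1/2$ for every edge, so Proposition~\ref{VR} applies at \emph{every} vertex $i\in V$ with the uniform choice $\epsilon = 1/2$.

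First, I would fix $i\in V$ and invoke Proposition~\ref{VR} with $\epsilon = 1/2$. This produces a threshold $\alpha_i\in (0,1/2)$, given explicitly in~\eqref{prealpha}, such that $f_i(x) < 0$ for every $x\in P_i(\alpha_i) - \{e_i\}$. The key quantitative point is that the formula $\alpha_i = 1/2 - 1/(4(2d_i-1))$ is monotonically increasing in the out-degree $d_i = |V^-_i|$ (note: in the paper's notation one reads $d_i$ as the degree used in Proposition~\ref{VR}), and since $d_i \le n-1$ we have $\alpha_i \le \alpha$ with $\alpha$ as defined in~\eqref{alpha}. In particular, $P_i(\alpha) \subseteq P_i(\alpha_i)$, so Proposition~\ref{VR} yields the stronger-looking consequence
\begin{equation*}
f_i(x) < 0, \qquad \forall\, x \in P_i(\alpha) - \{e_i\}.
\end{equation*}

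Next, suppose $x$ is a non-vertex equilibrium of~\eqref{MODEL}, so that $x \neq e_j$ for every $j\in V$ and $f_i(x) = 0$ for every $i\in V$. I would then argue by contrapositive at each coordinate $i$: since $x \neq e_i$ but $f_i(x) = 0$ (not strictly negative), $x$ cannot belong to $P_i(\alpha)$, which forces $x_i < \alpha$. Applying this at every $i\in V$ gives $x_i \le \alpha$ for all $i$, which by the definition~\eqref{eqDefQ} of $Q$ (equivalently, by the intersection representation~\eqref{reexDefq}) means precisely $x \in Q$.

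There is really no serious obstacle here; the corollary is essentially a rewriting of Proposition~\ref{VR}. The only bookkeeping to be careful about is (i) checking that the hypothesis $c_{ji}\le \epsilon$ of Proposition~\ref{VR} is satisfied uniformly across vertices with $\epsilon = 1/2$, which is immediate from Assumption~\ref{asmp:keyasmp}, and (ii) verifying that $\alpha_i\le \alpha$ for all $i$ so that a single threshold $\alpha$ controls every coordinate simultaneously. Both are addressed by the monotonicity of~\eqref{prealpha} in $d_i$, already noted in the paragraph following Proposition~\ref{VR}.
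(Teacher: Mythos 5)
Your proposal is correct and follows essentially the same route as the paper: the paper's own proof likewise observes that any $x\in\Sp[V]-Q$ lies in some $P_i(\alpha)$, and then applies Proposition~\ref{VR} (via the bound $\alpha_i\le\alpha$ established right after its proof) to conclude $f_i(x)<0$, so $x$ cannot be an equilibrium. One notational quibble: in Proposition~\ref{VR}, $d_i$ is the cardinality of $V^+_i$ (incoming neighbors), not the out-degree $|V^-_i|$, though this does not affect your argument since only $d_i\le n-1$ and the monotonicity of~\eqref{prealpha} in $d_i$ are used.
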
 

\begin{proof} 
If $x\in \Sp[V] - Q$, then $x\in P_i(\alpha)$ for some $i\in V$. Since $x$ is not a vertex, by Proposition~\ref{VR}, we have $f_i(x) < 0$. Thus, $x$ can not be an equilibrium point if $x\notin Q$. 
\end{proof}

We now show that the set $Q$ is also positive invariant:    

\begin{cor}\label{DefQ}
The set $Q$ is positive invariant for system~\eqref{MODEL}.  For any initial condition $x(0)\in \Sp[V]$ other than a vertex, the trajectory $x(t)$ enters  $Q$ in a finite amount of time. 
\end{cor}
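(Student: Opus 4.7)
The plan is to deduce both assertions from Proposition~\ref{VR} via a first-crossing argument combined with an integral estimate. The key preparatory step is a coordinatewise monotonicity lemma: for each $i \in V$, if $x_i(t_0) \le \alpha$ at some time $t_0 \ge 0$, then $x_i(t) \le \alpha$ for all $t \ge t_0$. I would prove this by contradiction: at a would-be first crossing time $t^*$ with $x_i(t^*) = \alpha$ and $x_i(s) > \alpha$ for $s$ slightly greater than $t^*$, one has $\dot x_i(t^*) \ge 0$. But $\alpha \ge \alpha_i$ and $x(t^*) \ne e_i$ (because $x_i(t^*) = \alpha < 1$), so Proposition~\ref{VR} forces $f_i(x(t^*)) < 0$, a contradiction. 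Applied coordinatewise with $t_0 = 0$, this lemma immediately yields the positive invariance of $Q$.

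For the finite-time entry, I would extract a uniform dissipation rate from the bound derived in the proof of Proposition~\ref{VR}. Setting $\epsilon = 1/2$ in that proof gives, for every $x_i \ge \alpha$,
$$
f_i(x) \le -\Bigl(1 - \tfrac{1}{2d_i}\Bigr)(1 - x_i)\Bigl(x_i - \tfrac{d_i - 1}{2d_i - 1}\Bigr),
$$
and the explicit formula $\alpha = 1/2 - 1/(4(2n-3))$ together with $d_i \le n-1$ makes the final factor bounded below by a positive constant. Consequently $f_i(x) \le -C(1 - x_i)$ on $\{x_i \ge \alpha\}$ for some $C > 0$ depending only on $n$. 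Now fix any $i$ with $x_i(0) > \alpha$; the monotonicity lemma together with this sign information force $x_i$ to be strictly decreasing while the trajectory remains in $P_i(\alpha)$, so $x_i(t) \le x_i(0) < 1$ and $x(t) \ne e_i$ throughout. Dividing by $1 - x_i$ turns the estimate into the differential inequality $(d/dt)\log(1 - x_i) \ge C$, which integrates to
$$
1 - x_i(t) \ge (1 - x_i(0))\, e^{Ct},
$$
and hence drives $x_i$ below $\alpha$ by a finite time $T_i \le \log\!\bigl((1-\alpha)/(1-x_i(0))\bigr)/C$.

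To finish, only finitely many indices $i$ can satisfy $x_i(0) > \alpha$; for each such $i$ the preceding estimate yields $x_i(t) \le \alpha$ for $t \ge T_i$, after which the monotonicity lemma keeps $x_i(t) \le \alpha$ forever. Indices with $x_i(0) \le \alpha$ are handled directly by the lemma. Therefore $x(t) \in Q$ for every $t \ge \max_i T_i$. The main obstacle in this plan is isolating the uniform constant $C$, which forces one to unpack the inequality from the proof of Proposition~\ref{VR} and pair it carefully with the specific numerical value of $\alpha$; once that bookkeeping is done, both parts of the corollary follow from standard first-crossing and Gr\"onwall-type reasoning.
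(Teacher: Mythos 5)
Your proposal is correct, and it shares the paper's overall decomposition: both treat $Q$ coordinatewise through the sets $P_i(\alpha)$, using the representation $Q = \bigcap_{i\in V}\cl(\Sp[V]-P_i(\alpha))$, and both deduce invariance from Proposition~\ref{VR}'s sign condition $f_i<0$ on $P_i(\alpha)-\{e_i\}$ (the paper asserts this invariance in one line; your first-crossing argument at $x_i(t^*)=\alpha$ is just the spelled-out version, and it is sound since $x_i(t^*)=\alpha<1$ rules out $e_i$). Where you genuinely diverge is the finite-time entry. The paper argues softly: for each $i$ with $x_i(0)>\alpha$ it sets $v_i:=\inf\{|f_i(x)| \mid x\in P_i(\alpha)-P_i(x_i(0))\}$, obtains $v_i>0$ from compactness of $\cl(P_i(\alpha)-P_i(x_i(0)))$ (which excludes $e_i$ because $x_i(0)<1$) together with strict negativity of $f_i$ there, and concludes with the linear-in-time bound $T_i=(x_i(0)-\alpha)/v_i+1$. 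You instead reopen the proof of Proposition~\ref{VR}, specialize $\epsilon=1/2$, and verify — correctly, since $\alpha-\frac{d_i-1}{2d_i-1}=\frac{1}{2(2d_i-1)}-\frac{1}{4(2n-3)}\ge\frac{1}{4(2n-3)}$ when $d_i\le n-1$ — that $f_i(x)\le -C(1-x_i)$ on $\{x_i\ge\alpha\}$ with $C>0$ depending only on $n$, then integrate the resulting Gr\"onwall inequality to get the explicit exit time $\log\bigl((1-\alpha)/(1-x_i(0))\bigr)/C$. Your route buys uniformity: $C$ is the same for all $i$ and all trajectories, and the time bound is explicit in the initial condition, whereas the paper's $v_i$ is obtained non-constructively and depends on $x_i(0)$; the price is exactly the bookkeeping you flag. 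One small repair: the displayed inequality you quote is derived in Proposition~\ref{VR}'s proof only when $V^+_i\neq\emptyset$ (it is ill-formed for $d_i=0$); in the no-incoming-neighbor case $f_i(x)=-x_i(1-x_i)\le-\alpha(1-x_i)$ on $P_i(\alpha)$, so the same $C$-type estimate holds trivially and your argument goes through unchanged.
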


\begin{proof}
We first show that $Q$ is positive invariant. From~\eqref{reexDefq}, 
it suffices to show that each  $\cl(\Sp[V] - P_i(\alpha))$ is positive invariant. By construction, 
$x_i \le  \alpha $ if and only if $x\in \cl(\Sp[V] - P_i(\alpha))$. On the other hand, by Proposition~\ref{VR},  $f_i(x) < 0$ for all $x\in P_i(\alpha)$. It then follows that $\cl(\Sp[V] - P_i(\alpha))$ is positive invariant.

We now show that if $x(0)\neq e_i$ for any $i\in V$, then there exists an instant $T\ge 0$ such that $x(t)\in Q$ for all $t\ge T$. Similarly, it suffices to show that for each $i\in V$, there exists an instant $T_i\ge 0$ such that $x(t) \in \cl(\Sp[V] - P_i(\alpha))$ for all $t\ge T_i$. First, note that if $x_i(0)\leq \alpha$, then $x\in \cl(\Sp[V] -P_i(\alpha))$. By the fact that $ \cl(\Sp[V] -P_i(\alpha))$ is positive invariant, we can choose $T_i = 0$. We now assume that $x_i(0) > \alpha$. Then, 
$$
P_i(\alpha) \supsetneq P_i(x_i(0)), 
$$
and hence 
the set  $P_i(\alpha) - P_i(x_i(0))$ is nonempty. 
Let 
\begin{equation*}
v_i := \inf\big \{ |f_i(x) | \mid x\in P_i(\alpha) - P_i(x_i(0))\big \}
\end{equation*}  
Since $\cl\left(P_i(\alpha) - P_i(x_i(0))\right)$ is compact over which $f_i(x)$ is strictly negative, we have $v_i>0$. 
Let 
$$
T_i := \frac{x_i(0) - \alpha} { v_i} + 1
$$
Then, by construction, the trajectory $x(t)$ will enter  $\cl(\Sp[V] - P_i(\alpha))$ in no more than 
$T_i$ units of time. This completes the proof. 
\end{proof}


\section{The Global Convergence of \\The Self-Appraisal Model}
In this section, we establish the global convergence of system~\eqref{MODEL}. The section is divided into four parts: In subsection~\ref{ssec:nonroot}, we show that the self-appraisals of non-root agents converge to zero. In subsection~\ref{ssec:VB}, we establish the second part of Theorem~\ref{MAIN}, i.e., there is only one stable equilibrium point $x^*$ of system~\eqref{MODEL}, which lies in $\Sp[V_r]$, and moreover, $0< x^*_i< 1/2$, for all $i\in V_r$. In subsection~\ref{ssec:VC}, we show that if the initial condition $x(0)$ of system~\eqref{MODEL} is such that $x(0)\in Q$, and $x_i(0) = 0$ for $i\notin V_r$, then, the trajectory $x(t)$ converges to the unique stable equilibrium point~$x^*$. In subsection~\ref{ssec:VD}, we combine all the results and prove Theorem~\ref{MAIN}.

\subsection{Convergence of self-appraisals of non-root agents}\label{ssec:nonroot}
An agent~$i$ is said to be a {\bf non-root} agent if~$i$ is not a root of $G$. We establish in this subsection the result that the self-appraisals of non-root agents decay to zero. 

\begin{pro}\label{pro31}
Let $x(0) \in \Sp[V]$ be other than a vertex. Then, $x_i(t)$ converges to $0$ for all $i\notin V_r$.  
\end{pro}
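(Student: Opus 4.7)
The plan is to set up a LaSalle-type argument with the simple linear function
$$
L(x) := \sum_{i \in V - V_r} x_i.
$$
First I would invoke Corollary~\ref{DefQ}: since $x(0)$ is not a vertex, the trajectory enters the compact, positively invariant set $Q$ in finite time, on which $x_i < 1$ for every $i$. From here on I analyze the restricted system on $Q$, where $L$ is continuous and bounded below by $0$.

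Computing $\dot L$ is the next step. Using $\sum_{j\in V^-_i} c_{ij} = 1$, the dynamics takes the flow-conservation form
$$
\dot x_i = -\sum_{j \in V^-_i} c_{ij}(1-x_i)x_i + \sum_{j \in V^+_i} c_{ji}(1-x_j)x_j.
$$
The crucial graph-theoretic observation is that no edge of $G$ leaves $V_r$ towards $V - V_r$: if $i \in V_r$ and $i \to j$, then every vertex $k$ has a path to $i$, and concatenating with $i \to j$ yields a path to $j$, so $j \in V_r$. Summing the display above over $i \in V - V_r$, the flow contributions along edges lying entirely inside $V - V_r$ cancel, and the only surviving terms correspond to edges from $V - V_r$ directly into $V_r$:
$$
\dot L = -\sum_{\substack{i \to j\in E\\ i\in V-V_r,\, j\in V_r}} c_{ij}(1-x_i)x_i \le 0.
$$
On $Q$, where $x_i < 1$, equality holds precisely when $x_i = 0$ for every non-root $i$ that has a direct outgoing edge to a root.

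The main work is the LaSalle invariance analysis. Layer the vertices by graph-distance to $V_r$: set $V_k := \{i \in V : d(i, V_r) = k\}$, where $d(i, V_r)$ is the length of a shortest directed path from $i$ to $V_r$. Rootedness of $G$ ensures every such distance is finite, so $V = V_0 \cup V_1 \cup \cdots \cup V_D$ with $V_0 = V_r$. I would show by induction on $k$ that any trajectory contained in the largest invariant subset $M$ of $\{\dot L = 0\} \cap Q$ satisfies $x_i \equiv 0$ on $V_k$ for every $k \ge 1$. The base case $k=1$ is immediate from the equality condition for $\dot L = 0$. For the inductive step, if $x_i \equiv 0$ on $V_1 \cup \cdots \cup V_k$, then for each $i \in V_k$ the constraint $\dot x_i \equiv 0$ reduces to
$$
\sum_{\ell \in V^+_i} c_{\ell i}(1-x_\ell)x_\ell \equiv 0;
$$
since $V^+_i$ contains no roots (an edge $\ell \to i$ into a non-root $i$ forces $\ell$ to be a non-root by the same path-concatenation argument) and $x_\ell < 1$ on $Q$, this gives $x_\ell \equiv 0$ for every $\ell \in V^+_i$. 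Every $\ell \in V_{k+1}$ has the first edge of a shortest path landing in some $i \in V_k$, and therefore $\ell \in V^+_i$, which closes the induction.

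Hence $M \subset \Sp[V_r]$, and LaSalle's invariance principle yields $x_j(t) \to 0$ for every $j \notin V_r$. The only real obstacle is this inductive invariance step together with the BFS layering it requires; once the observation that no edge exits $V_r$ is made, the computation of $\dot L$ and the application of LaSalle are routine, and the hypothesis that $x(0)$ is not a vertex enters only to push the trajectory into $Q$.
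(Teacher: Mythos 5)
Your proof is correct, but it takes a genuinely different route from the paper's. The raw ingredients coincide: your $L(x)=\sum_{i\notin V_r}x_i$ is exactly $1-\phi_r(x)$ for the paper's $\phi_r(x)=\sum_{i\in V_r}x_i$; your derivative computation (edges never leave $V_r$, so $\dot L$ is minus a sum of $c_{ij}(1-x_i)x_i$ over edges from $V-V_r$ into $V_r$) is the paper's equation for $\dot\phi_r$ with the sign flipped; and your BFS layers $V_k$ are the paper's sets $D_{V_r}(k)$. The divergence is in how convergence is extracted. The paper argues by hand: $\phi_r(x(t))$ is monotone and bounded, hence has a limit; a Barbalat-type contradiction argument (using boundedness of $\ddot\phi_r$) shows $\dot\phi_r\to 0$, which kills the first layer because the trajectory eventually lies in $Q$ where $x_i\le\alpha<1/2$; the outer layers are then handled by a separate asymptotic propagation lemma (Lemma~\ref{pro31lem1}: $x_i\to 0$ forces $x_j\to 0$ for all $j\in V^+_i$), itself proved by a time-sequence contradiction. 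You instead invoke LaSalle's invariance principle on the compact positively invariant set $Q$ --- Corollary~\ref{DefQ} supplies exactly the needed hypotheses, including $x_i\le\alpha<1/2$ so that $(1-x_i)x_i=0$ iff $x_i=0$ --- and you carry out the layer propagation \emph{exactly}, inside the largest invariant set $M$: vanishing of $x_i$ on layer $k$ plus nonnegativity of each summand $c_{\ell i}(1-x_\ell)x_\ell$ forces $x_\ell\equiv 0$ on all incoming neighbors, hence on layer $k+1$, so $M\subseteq\Sp[V_r]$. What your route buys: both of the paper's contradiction arguments (the Barbalat step and Lemma~\ref{pro31lem1}) are subsumed into one appeal to a standard theorem, and the propagation becomes an algebraic identity on $M$ rather than an asymptotic statement about trajectories. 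What the paper's route buys: it is self-contained and elementary, not resting on the LaSalle machinery. One cosmetic nit in your write-up: concatenating a path from $k$ to $i$ with the edge $i\to j$ yields a walk, not a path; one truncates at the first visit to $j$ --- though the paper glosses over the same point when asserting that outgoing neighbors of roots are roots.
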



To prove Proposition~\ref{pro31}, we first etablish the following fact:

\begin{lem}\label{pro31lem1}
If $x_i(t)$ converges to $0$, then so does $x_j(t)$ for all $j\in V^+_i$. 
\end{lem}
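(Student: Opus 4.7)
The plan is an $\omega$-limit set argument. I may restrict attention to non-vertex initial conditions, since the ambient context (Proposition~\ref{pro31}) is exactly that, and a vertex initial condition $x(0)=e_k$ either violates the hypothesis (when $k=i$) or produces a genuine counterexample for $j=k\in V^+_i$ with $x_j\equiv 1$. For a non-vertex $x(0)$, Corollary~\ref{DefQ} places $x(t)$ inside the compact, positively invariant set $Q$ for all $t\ge T_0$, so the $\omega$-limit set $\omega(x(0))$ is a non-empty, compact, flow-invariant subset of $Q\subset\Sp[V]$.

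Given any $y\in\omega(x(0))$, the hypothesis $x_i(t)\to 0$ forces $y_i=0$. Because $\omega(x(0))$ is invariant under the flow of~\eqref{MODEL}, the solution $y(\cdot)$ issuing from $y$ remains in $\omega(x(0))$ for all $t\ge 0$, so $y_i(t)\equiv 0$ and $\dot y_i(t)\equiv 0$. Substituting into~\eqref{MODEL} yields
\begin{equation*}
0 \;=\; \sum_{j\in V^+_i} c_{ji}\,(1-y_j(t))\,y_j(t), \qquad t\ge 0.
\end{equation*}
Since every $c_{ji}$ is positive and every summand is non-negative, each term vanishes; hence $y_j(t)\in\{0,1\}$ for every $j\in V^+_i$ and every $t\ge 0$. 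By continuity, each $y_j(\cdot)$ is identically $0$ or identically $1$.

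To eliminate the alternative $y_j(\cdot)\equiv 1$, I would invoke the inclusion $y(t)\in Q$, which gives $y_j(t)\le\alpha<1/2$. Hence $y_j=y_j(0)=0$ for every $y\in\omega(x(0))$ and every $j\in V^+_i$, which is precisely the assertion $x_j(t)\to 0$.

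The one delicate point is the inclusion $\omega(x(0))\subset Q$, which is what makes the strict bound $\alpha<1/2$ available to exclude the constant-one branch of $(1-y_j)y_j=0$; this follows at once from Corollary~\ref{DefQ} together with the closedness of $Q$ in $\Sp[V]$. Everything else is a routine invariance computation coupled with a continuity/connectedness argument on the two-point set $\{0,1\}$.
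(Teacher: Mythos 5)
Your proof is correct, but it takes a genuinely different route from the paper's. The paper argues by direct contradiction with explicit estimates: assuming $x_j(t)\not\to 0$, it extracts times $t_k\to\infty$ with $x_j(t_k)>\epsilon$, uses the uniform bound on $|f_j|$ to stretch these into intervals $[t_k,t_k+\tau]$ on which $x_j(t)\in[\epsilon/2,1/2]$ (the upper bound supplied by Corollary~\ref{DefQ}), deduces a uniform inflow $c_{ji}(1-x_j)x_j\ge\delta$ on those intervals, and shows this inflow pumps $x_i(t_k+\tau)$ above $\delta\tau/2$ once $(1-x_i)x_i<\delta/2$, contradicting $x_i(t)\to 0$. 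You instead run a LaSalle-type $\omega$-limit set argument: invariance of $\omega(x(0))$ forces $y_i(\cdot)\equiv 0$, hence $f_i(y(\cdot))\equiv 0$ along limit solutions; nonnegativity of each summand then kills every term $c_{ji}(1-y_j)y_j$; and the inclusion $\omega(x(0))\subset Q$ (valid since $Q$ is closed, positively invariant, and entered in finite time) rules out the branch $y_j=1$ --- playing exactly the role that the bound $x_j\le 1/2$ plays in the paper. Your version is shorter and more conceptual, dispensing with the $\epsilon$--$\tau$--$\delta$ bookkeeping at the price of invoking the standard invariance and attraction properties of $\omega$-limit sets of precompact orbits (available here, since the vector field is polynomial and the simplex compact); the paper's version is more elementary and self-contained. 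You are also right to restrict to non-vertex initial conditions: the lemma tacitly assumes this (its proof, like yours, rests on Corollary~\ref{DefQ}, which excludes vertices), and your observation that $x(0)=e_k$ with $k\in V^+_i$ would otherwise be a counterexample is accurate.
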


\begin{proof}
We prove Lemma~\ref{pro31lem1} by contradiction. Suppose that there is a vertex~$j\in V^+_i$ such that $x_j(t)$ does not converge to $0$. Then, there is an $\epsilon\in (0,1/2)$, and an infinite time sequence $\{t_k\}_{k\in\bb{N}}$ with $\lim_{k\to \infty} t_k = \infty$ such that $x_j(t_{k})>\epsilon$ for all $k\in \mathbb{N}$. 

Since there is an upper bound for $|f_j(x)|$ for all $x\in\Sp[V]$, there is a finite duration time $\tau>0$ such that 
$$
x_j(t_k + \tau') \ge \epsilon/2, \hspace{10pt} \forall \tau'\in [0,\tau] \mbox{ and } \forall k\in\bb{N}.
$$
Rescale $\tau$, if necessary, so that $\tau\le 1$.  On the other hand, 
by Corollary~\ref{DefQ}, there exists an instant $T\ge 0$ such that $x(t)\in Q$ for all $t\ge T$. This, in particular, implies that  $x_j(t)\le 1/2$ for all $t\ge T$. Without loss of generality, assume that $t_k\ge T$ for all $k\in \bb{N}$, and hence 
$$
x_j(t)\in [\epsilon/2, 1/2]
$$ 
for all $t\in [t_k,t_k + \tau]$ and for all $ k\in\bb{N}$. 
It then follows that 
\begin{equation}\label{cjixjt}
c_{ji}(1 - x_j(t))x_j(t) \ge c_{ji}(1 - \epsilon/2)\epsilon/2.
\end{equation}
for all $t\in [t_k,t_k + \tau]$ and for all $ k\in\bb{N}$.

For convenience, let 
$
\delta:=  c_{ji}(1 - \epsilon/2)\epsilon/2 
$.  Since $x_i(t)$ converges to $0$, there is an instant $T'$  such that 
\begin{equation}\label{xitcon}
x_i(t) \le \tau\delta/2 , \hspace{10pt} \forall t\ge T'. 
\end{equation}
Using the fact that $\tau <1$ and $(1 - x_i(t)) < 1$, we obtain
\begin{equation}\label{xitxit}
(1 - x_i(t) )x_i(t) < \delta/2, \hspace{10pt} \forall t\ge T'.
\end{equation} 
We recall that $f_i(x)$ (defined in~\eqref{eq:fi(x)}) is given by
$$
f_i(x) = -(1 - x_i)x_i + \sum_{j\in V_i^+}c_{ji} (1 - x_j) x_j.
$$
Combining~\eqref{cjixjt} and~\eqref{xitxit}, we then obtain
\begin{equation*}
f_i(x(t)) > \delta/2, \hspace{10pt} \forall t\in [t_k, t_k + \tau] 
\end{equation*}
as long as $t_k \ge T'$. It then follows that 
\begin{equation*}
x_i(t_k + \tau) = x_i(t_k) + \displaystyle \int^{t_k + \tau}_{t_k} f_i(x(t)) dt > \delta \tau /2
\end{equation*} 
which contradicts~\eqref{xitcon}. This completes the proof.  
\end{proof}

Recall that $S_i\subset V$ is the supporting set of~$i$, and $S_i(k)\subset V$ is defined such that if $j\in S_i(k)$, then there is a path $\gamma$ from $j$ to $i$, with $l(\gamma)\le k$. 
Also, recall that for a subset $V'$ of $V$, we have defined $S_{V'} =  \cup_{i\in V'} S_i$. In the remainder of this subsection, we focus on the case where $V'$ is $V_r$, the set of roots of $G$.

First, note that by the definition of $V_r$, we have $S_{V_r} = V$; indeed, for any vertex~$j$, there is a path from~$j$ to a root of $G$. 
Now, for each $k \ge 0$,  we define
$$
S_{V_r}(k) := \cup_{i\in V_r} S_i(k).    
$$
For $k = 0$, we have $S_{V_r}(0) = V_r$. 
It should be clear that
$
S_{V_r}(k-1) \subseteq S_{V_r}(k)
$, and moreover, there is an integer $m$ such that 
\begin{equation}\label{ss}
S_{V_r}(m) = V.   
\end{equation} 
We now assume that $m$ is the least integer such that~\eqref{ss} holds. 
Let 
\begin{equation*}
D_{V_r}(k) := S_{V_r}(k)  - S_{V_r}(k-1).  
\end{equation*}
With Lemma~\ref{pro31lem1} and the notations above, we prove Proposition~\ref{pro31}.

\begin{proof}[Proof of Proposition~\ref{pro31}]
We prove the proposition by subsequently showing that 
$$
\lim_{t\to\infty} x_i(t) = 0, \hspace{10pt} \forall i\in D_{V_r}(k), 
$$
for $k=1,\ldots, m$.  

{\it Base case}. We prove for $k = 1$. Let $\phi_r$ be a function on $\Sp[V]$ defined as follows:
\begin{equation}\label{eq:phirx}
\phi_r(x):= \sum_{i\in V_r}x_i. 
\end{equation}
We first compute the time derivative of $\phi_r(x)$ along a trajectory $x(t)$ of system~\eqref{MODEL}: first, let 
$E'$ be a subset of $E$ defined by
$$
E': = \{i\to j\in E \mid  i\in D_{V_r}(1), \, j\in V_r\};
$$
we then have
\begin{equation}\label{eq:phirinc}
\frac{d}{dt}{\phi}_r(x) 
  = \sum_{i\to j\in E'}  c_{ij}(1 - x_i )x_i \ge 0. 
\end{equation}
Note that by its definition~\eqref{eq:phirx}, we have that $\phi_r(x)\le 1$ for all $x\in \Sp[V]$. Also, note that from~\eqref{eq:phirinc}, $\phi_r(x(t))$ is monotonically increasing in~$t$. So,     
$
\lim_{t\to\infty}\phi_r(x(t))
$ exists. 

Next, we show that  $\dot{\phi}_r(x(t))$ converges to $0$. The proof will be similar to the proof of Lemma~\ref{pro31lem1}, and is carried out  by contradiction. Suppose that there exists an $\epsilon$, and a time sequence $\{t_k\}_{k\in \bb{N}}$ with $\lim_{k\to\infty}t_k = \infty $, such that 
$$
\dot{\phi}_r(x(t_k)) \ge \epsilon, \hspace{10pt} \forall k\in \bb{N}.
$$ 
Note that $|\ddot{\phi}_r(x(t))|$ is bounded above for any $x(t) \in \Sp[V]$, and hence there is a duration time $\tau > 0$ such that 
$$
\dot{\phi}_r(x(t)) \ge \epsilon/2, \hspace{10pt} \forall t\in [t_k, t_k + \tau] \hspace{5pt}\mbox{ and } \hspace{5pt} \forall k\in \bb{N}. 
$$
By passing to a subsequence, we can assume that 
$t_k + \tau < t_{k+1}$ for all $k\in \bb{N}$. 
Thus, we have
$$
\lim_{t\to \infty}\phi_r(x(t)) - \phi_r(x(0)) \ge \sum_{k\in \bb{N}} \int^{t_k + \tau}_{t_k} \dot{\phi}_r(x(s))ds.  
$$
By construction, each summand on the right hand side is  bounded below by $\epsilon\tau/2$, and hence the summation is infinite which contradicts the fact that $\lim_{t\to \infty}\phi_r(x(t))$ exists. 
 
Since $\dot{\phi}_r(x(t))$ converges to $0$  and each summand of $\dot{\phi}_r(x(t))$ in~\eqref{eq:phirinc} is non-negative, we have
 $$
\lim_{t\to \infty}x_i(t)(1- x_i(t)) = 0, \hspace{10pt} \forall i\in D_{V_r}(1).  
$$
Thus, $x_i(t)$ converges to either $0$ or $1$. 
Since $x(0)$ is not a vertex, by Corollary~\ref{DefQ}, the trajectory $x(t)$ enters $Q$ in a finite amount of time.  We thus conclude that     
 $x_i(t)$ converges to $0$ for all $i\in D_{V_r}(1)$.  

{\it Inductive step}. We assume that $x_j(t)$ converges to $0$ for all $j\in D_{V_r}(k - 1)$ for $k \ge 2$, and prove that $x_i(t)$ converges to $0$ for all $i\in D_{V_r}(k)$. Fix a vertex~$i$ in $D_{V_r}(k)$; then, from Lemma~\ref{prolem25}, there is a vertex~$j\in V^-_i \cap D_{V_r}(k-1)$. From the induction hypothesis, we have that $x_j(t)$ converges to $0$. Since $i\in V^+_j$,  by Lemma~\ref{pro31lem1}, $x_i(t)$ also converges to $0$. This completes the proof.
\end{proof}

\subsection{The non-vertex equilibrium point}\label{ssec:VB}
In this subsection, we prove that there exists a unique non-vertex equilibrium point of system~\eqref{MODEL}.  First, recall that $Q$ is defined in~\eqref{eqDefQ} and $\Sp[V_r]$ is the convex hull spanned by $e_i$, for $i\in V_r$. We then define
\begin{equation}\label{DefQr}
Q_r : = Q \cap \Sp[V_r].
\end{equation}
From Corollaries~\ref{spvr} and~\ref{DefQ}, both $\Sp[V_r]$ and $Q$ are positive invariant sets for system~\eqref{MODEL}, and hence  
 $Q_r$ is also a positive invariant set. 
We establish in this subsection the following result:

\begin{pro}\label{EQUI}
There is a unique non-vertex equilibrium point $x^*$ of system~\eqref{MODEL}, which lies in $Q_r$, and $0 < x^*_i < 1/2 $ for all $i\in V_r$. Moreover, $x^*$ is exponentially stable.  
\end{pro}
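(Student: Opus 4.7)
The plan is to handle existence and location of $x^*$ in $Q_r$, uniqueness and positivity via Perron--Frobenius, and exponential stability via a linearization analysis, in sequence.

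Existence: The set $Q_r$ is non-empty --- Assumption~\ref{asmp:keyasmp} forces $|V_r|\ge 3$, since every root has at least two outgoing neighbors and, by the reasoning in Corollary~\ref{spvr}, those outgoing neighbors are themselves roots, so the uniform vector on $V_r$ lies in $Q_r$ --- and it is compact, convex, and positively invariant by Corollaries~\ref{spvr} and~\ref{DefQ}. A standard Brouwer/flow argument (take a fixed point $x_n$ of the time-$(1/n)$ flow map $\Phi_{1/n}:Q_r\to Q_r$, pass to a convergent subsequence $x_n\to x^*$, and use $\Phi_{1/n}(x_n)=x_n+\tfrac{1}{n}f(x_n)+O(1/n^2)$ to extract $f(x^*)=0$) produces an equilibrium $x^*\in Q_r$. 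Conversely, Proposition~\ref{pro31} shows that any non-vertex equilibrium has $x^*_i=0$ for $i\notin V_r$, so every non-vertex equilibrium in fact lies in $Q_r$.

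Uniqueness and positivity: Let $y^*_i:=x^*_i(1-x^*_i)$ for $i\in V_r$. Using $x^*_j=0$ for $j\notin V_r$, the equilibrium equation reads $y^*=A\,y^*$, where $A\in\R^{V_r\times V_r}$ has $A_{ij}:=c_{ji}$ whenever $j\to i$ is an edge of $G_r$ and $0$ otherwise. Since $V^-_j\subseteq V_r$ for $j\in V_r$ (the same argument used in the proof of Corollary~\ref{spvr}), one has $\sum_i A_{ij}=\sum_{i\in V^-_j}c_{ji}=1$, so $A$ is column-stochastic; strong connectedness of $G_r$ makes $A$ irreducible, and Perron--Frobenius yields a positive Perron eigenvector $y^0$ for the eigenvalue~$1$, unique up to positive scaling. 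If $x^*_i=0$ for some $i\in V_r$, then (since $1-x^*_j\ge 1-\alpha>0$) the equilibrium condition forces $x^*_j=0$ for every $j\in V^+_i\cap V_r$; iterating along backward edges and invoking strong connectedness of $G_r$ gives $x^*\equiv 0$ on $V_r$, contradicting $\sum_{i\in V_r}x^*_i=1$. Hence $y^*>0$, so $y^*=\lambda y^0$ for some $\lambda>0$; the scaling is fixed by $\sum_{i\in V_r}\phi(\lambda y^0_i)=1$ with $\phi(y):=(1-\sqrt{1-4y})/2$ the strictly increasing inverse of $x\mapsto x(1-x)$ on $[0,1/2]$, and strict monotonicity of $\phi$ yields at most one such $\lambda$ (existence of which is supplied by the equilibrium constructed above). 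Combined with $x^*\in Q_r$, this gives $0<x^*_i\le\alpha<1/2$.

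Exponential stability: Compute $J:=Df(x^*)$ and order coordinates as $(V_r,V_n)$ with $V_n:=V\setminus V_r$. Because no root has an outgoing edge into a non-root, $J_{nr}=0$, so $J$ is block upper-triangular and its spectrum is the union of the spectra of $J_{rr}$ and $J_{nn}$. A direct calculation gives $J_{rr}=(A-I)D$ with $D:=\diag(1-2x^*_i)_{i\in V_r}>0$; this matrix has negative diagonal, non-negative off-diagonals whose support is the strongly connected digraph $G_r$, and zero column sums --- i.e., it is a (negated) directed Laplacian of $G_r$. By the classical spectral theorem for directed Laplacians of strongly connected digraphs, $0$ is a simple eigenvalue of $J_{rr}$ whose right eigenvector, proportional to the strictly positive vector $D^{-1}y^0$, is transverse to the simplex tangent $\{v:\mathbf{1}^\top v=0\}$, while all other eigenvalues have negative real part. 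For the non-root block, $J_{nn}=-I+B$ with $B_{ij}=c_{ji}$ for $j\in V^+_i\cap V_n$; here $B^\top$ is the restriction to $V_n$ of a Markov transition matrix whose chain, started anywhere in $V_n$, almost surely escapes to $V_r$ (each non-root has a directed path to $V_r$), so $\rho(B)<1$ and $J_{nn}$ is Hurwitz. Restricting $J$ to the simplex tangent at $x^*$ thus yields a Hurwitz operator, giving exponential stability.

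The main obstacle I anticipate is the spectral analysis of the scaled Laplacian $(A-I)D$: one must show that right-multiplying the directed Laplacian $A-I$ by the positive diagonal $D$ preserves the ``simple zero, rest in the open left half-plane'' structure. The cleanest route, and what I would spell out in detail, is to recognize $(A-I)D$ itself as a column-sum-zero Metzler matrix supported on the strongly connected $G_r$, and then invoke the classical directed-Laplacian spectral theorem directly to $(A-I)D$ rather than trying to track the product via the spectrum of $A-I$.
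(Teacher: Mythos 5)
Your proof is correct, and it genuinely differs from the paper's at two points. For existence and uniqueness the paper uses no fixed-point theorem: it writes the equilibrium condition as $C^\top(I - X^*)x^* = 0$, notes that rootedness makes $0$ a simple eigenvalue of $C$ with left eigenvector $v = (v',0)$ supported on $V_r$, deduces $(1-x^*_i)x^*_i = \mu v_i$, solves to get the one-parameter family $x^*_i(\mu) = \frac{1}{2}\bigl(1 - \sqrt{1 - 4\mu v_i}\bigr)$, and settles existence and uniqueness simultaneously by showing $\psi(\mu) := \sum_{i\in V_r} x^*_i(\mu)$ is strictly increasing with $\psi(\mu_1) > 1$ at the endpoint $\mu_1 = 1/(4\max_i v_i)$ --- the endpoint estimate resting on $v_i \le 1/3$ (a consequence of $c_{ij}\le 1/2$) and a concavity optimization. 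Your Perron vector $y^0$ is exactly $v'$ and your $\lambda$ is the paper's $\mu$, so your uniqueness step is essentially the paper's monotonicity argument in different clothing; what is genuinely different is that your Brouwer/flow argument on the compact, convex, positively invariant $Q_r$ replaces the paper's quantitative endpoint estimate, trading an explicit computation for a softer topological one (and relocating the role of $c_{ij}\le 1/2$ into the invariance of $Q$ and the bound $|V_r|\ge 3$). For stability the paper is more economical: from $J(x^*) = C^\top(I - 2X^*)$ it observes $J(x^*)^\top \mathbf{1} = (I-2X^*)C\mathbf{1} = 0$, so $J(x^*)^\top$ is infinitesimally stochastic with induced graph the rooted $G$, whence zero is a simple eigenvalue with all others in the open left half-plane, and it concludes on the invariant tangent space $L$. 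Your block-triangular split over $(V_r, V\setminus V_r)$ with $J_{rr} = (A-I)D$ and $J_{nn} = -I + B$ is finer-grained but equally valid: the obstacle you flag dissolves exactly as you predict, since $(A-I)D$ is itself a column-sum-zero Metzler matrix whose support digraph is strongly connected, so the directed-Laplacian spectral theorem applies to it directly, and your substochasticity argument for $B$ additionally shows the non-root block is strictly stable --- structural information the paper's one-line argument never exposes. One small citation repair: the placement of every non-vertex equilibrium in $Q$ (which you use when writing $1 - x^*_j \ge 1-\alpha$ and when concluding $x^*_i \le \alpha < 1/2$) comes from Corollary~\ref{R1}, not from Proposition~\ref{pro31}, which only yields $x^*_i = 0$ for $i\notin V_r$ (applied, as you correctly do, to the constant trajectory); with that reference added, your argument is complete.
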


To prove Proposition~\ref{EQUI}, we re-write system~\eqref{MODEL} into a matrix form. We first recall the definition of infinitesimally stochastic matrices:

\begin{Definition}[Infinitesimally stochastic matrix] 
A square matrix $A$ is an {\bf infinitesimally stochastic matrix} if each off-diagonal entry of $A$ is non-negative, and the entries of each row sum to zero.  
\end{Definition}

We now define an infinitesimally stochastic matrix $C$ by specifying its off-diagonal entries: define the $ij$-th entry of $C$ to be $c_{ij}$ if $i\to j\in E$, and $0$ otherwise. The diagonal entries of $C$ are then uniquely determined by the condition that the rows of $C$ sum to zero. We note here that the negative of the matrix $C$ is known as  the Laplacian matrix of a weighted graph $G$, with weights $c_{ij}$ for $i\to j \in E$.

For a vector $x$ in $\Sp[V]$, let $X$ be a diagonal matrix defined as follows: $$X:= \diag(x_1,\ldots,x_n).$$ 
Then, system~\eqref{MODEL} can be rewritten into the following matrix form:
\begin{equation}\label{MODEL1}
\dot{x} = C^{\top}(I - X)x.
\end{equation}
For simplicity, but without loss of generality, we assume in the rest of this subsection that 
\begin{equation}\label{eq:defVr1-k}
V_r = \{1,\ldots, k\}.
\end{equation}
Note that $k \ge 3$, i.e., the cardinality of $V_r$ is at least three; indeed, by the assumption, each vertex~$i$ of $G$ has at least two outgoing neighbors, and moreover, an outgoing neighbor of a root is also a root. 
Following~\eqref{eq:defVr1-k}, we have that $C$ is a lower block-triangular matrix:  
\begin{equation}\label{blockc}
C = 
\begin{pmatrix}
C_{11} & 0\\
C_{21} & C_{22}
\end{pmatrix}
\end{equation}
with $C_{11}$ a $k$-by-$k$ infinitesimally stochastic matrix.

We next state some known facts about the matrix $C$: (i) Since $G$ is rooted and $c_{ij}>0$ for all $i\to j\in E$,  the matrix $C$ has zero as a simple eigenvalue  while all the other eigenvalues of $C$ have negative real parts (see, for example,~\cite{ren2005consensus}). Using the fact that $C_{11}$ is an infinitesimally stochastic matrix, we have that $C_{11}$ has zero as a simple eigenvalue, with all the other eigenvalues of $C_{11}$ having negative real parts, and that $C_{22}$ is a {\it stable} matrix, i.e., all the eigenvalues of $C_{22}$ have negative real parts. (ii) 
Let $v\in \R^n$ (resp. $v' \in \R^k$) be the left-eigenvector of $C$ (resp. $C_{11}$) corresponding to the zero eigenvalue; then   
$
v = (v',0)
$.  
Scale $v$ such that $\sum^n_{i=1} v_i = 1$; then all entries of $v'$ are positive, and they sum to one (see, for example,~\cite{farina2011positive}).

We next state another relevant property of $v$. Recall that $c_{ij} \le 1/2$ for all $i\to j\in E$. It then follows that 
\begin{equation}\label{vileq}
v_i \le 1/3, \hspace{10pt} \forall i\in V.
\end{equation}
To see this, note that  $C^\top v = 0$, and hence  
$$
v_i = \sum_{j\in V^+_i} c_{ji} v_j 
$$ 
for all $i\in V$.  Thus, 
$$
v_i \le \frac{1}{2} \sum_{j\neq  i} v_j = \frac{1}{2} (1 - v_i),
$$
from which~\eqref{vileq} holds.

We now return to the proof of Proposition~\ref{EQUI}. We establish the proposition by first showing that the unique equilibrium point $x^*$ lies in $Q_r$, with $0 < x^*_i < 1/2 $ for all $i\in V_r$, and then showing that $x^*$ is exponentially stable. This is done in Lemmas~\ref{EQUlem2} and~\ref{esie} below.

\begin{lem}\label{EQUlem2}
There exists a unique non-vertex equilibrium point $x^*$ of system~\eqref{MODEL}. Moreover, $x^*\in Q_r$, and $0 < x^*_i < 1/2$ for all $i\in V_r$. 
\end{lem}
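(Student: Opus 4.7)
My approach is to reduce the equilibrium equation to a one-dimensional fixed-point problem and then invoke an intermediate-value argument. Using the matrix form~\eqref{MODEL1} together with the fact that $\ker C^\top = \operatorname{span}(v)$ (since $0$ is a simple eigenvalue of $C$), any equilibrium $x^*$ must satisfy $(I - X^*) x^* = \lambda v$ for some scalar $\lambda \geq 0$, i.e.\ $(1 - x^*_i) x^*_i = \lambda v_i$ for every $i$. For $i \notin V_r$ we have $v_i = 0$, forcing $x^*_i \in \{0,1\}$; the non-vertex hypothesis together with the sum-to-one constraint then pin $x^*_i = 0$, so $x^* \in \Sp[V_r]$.

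For $i \in V_r$, the quadratic $x_i^2 - x_i + \lambda v_i = 0$ has roots $\tfrac{1}{2}(1 \pm \sqrt{1 - 4 \lambda v_i})$, real iff $\lambda \in [0, \lambda_{\max}]$ with $\lambda_{\max} := 1/(4 v_{\max})$ and $v_{\max} := \max_{i \in V_r} v_i$. By Corollary~\ref{R1} we have $x^* \in Q$, so $x^*_i \le \alpha < 1/2$; this selects the smaller root at every $i \in V_r$. Summing over $i \in V_r$ and using $\sum_i x^*_i = 1$ then collapses the problem to the scalar equation $\phi(\lambda) = 1$, where
$$\phi(\lambda) := \sum_{i \in V_r} \frac{1 - \sqrt{1 - 4\lambda v_i}}{2}.$$
Since $\phi$ is continuous, strictly increasing on $[0, \lambda_{\max}]$, and vanishes at $0$, uniqueness is automatic once existence is established.

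The main obstacle is to prove the strict inequality $\phi(\lambda_{\max}) > 1$, as only the strict version ensures $\lambda^* < \lambda_{\max}$ and hence $x^*_i < 1/2$ rather than merely $\le 1/2$. Taking $v_1 = v_{\max}$ without loss of generality, the target becomes $\sum_{i \geq 2}\sqrt{1 - v_i/v_{\max}} < k - 2$, where $k = |V_r| \geq 3$. Cauchy--Schwarz bounds the left side by $\sqrt{(k-1)\bigl(k - 1 - (1 - v_{\max})/v_{\max}\bigr)}$, which is strictly less than $k - 2$ precisely when $v_{\max} < (k-1)/(3k-4)$; since $1/3 < (k-1)/(3k-4)$ for every $k \geq 2$, the bound $v_{\max} \le 1/3$ from~\eqref{vileq} (itself a consequence of $c_{ij} \le 1/2$) delivers the required strict inequality. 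The intermediate value theorem then produces a unique $\lambda^* \in (0, \lambda_{\max})$; positivity of $\lambda^*$ yields $x^*_i > 0$ for $i \in V_r$, and Corollary~\ref{R1} places $x^*$ in $Q_r = Q \cap \Sp[V_r]$.
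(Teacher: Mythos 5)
Your proof is correct and follows essentially the same route as the paper's: the reduction $(I-X^*)x^* = \mu v$ via the simple zero eigenvalue of $C$, elimination of non-root coordinates by the non-vertex hypothesis, smaller-root selection through Corollary~\ref{R1}, and the scalar strictly monotone equation resolved by showing the value at $\lambda_{\max} = 1/(4v_{\max})$ strictly exceeds $1$ using the bound $v_i \le 1/3$ from~\eqref{vileq}. The only (immaterial) difference is that you establish the key inequality $\sum_{i\ge 2}\sqrt{1 - v_i/v_{\max}} < k-2$ by Cauchy--Schwarz, whereas the paper solves the equivalent strictly concave maximization under the constraint $\sum_{i\ge 2} w_i \ge 2$; both yield the same strict bound.
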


\begin{proof}
Note that from~\eqref{MODEL1}, if $x^*$ is an equilibrium point of system~\eqref{MODEL}, then 
\begin{equation*}\label{equieq}
C^{\top}(I- X^*) x^* = 0.
\end{equation*} 
Since the null space of $C^{\top}$ is spanned by the single vector $v$, we must have
\begin{equation}\label{muvv}
(I- X^*) x^* = \mu v 
\end{equation}
for some $\mu \in\R$. Note that the entries of $(1 - X^*)x^*$ and of $v$ are all nonnegative, and thus we have  $\mu \ge 0$. 

We first show that $x^*_i = 0$ for all $i> k$. 
Since $v_i = 0$ for all $i>k$, we have 
$
(1 - x^*_i)x^*_i  = 0$ for all $i > k$. 
Since $x^*$ is not a vertex, we have $x^*_i \neq 1$, and hence 
$
x^*_i = 0$ for all $i > k$. 

We now solve for $x^*_i$ for all $i\le k$. Following~\eqref{muvv}, we have 
\begin{equation*}\label{eqsolfx}
(1 - x^*_i)x^*_i = \mu v_i, \hspace{10pt} \forall i\le k. 
\end{equation*}
From Corollary~\ref{R1}, we have 
$
x^*_i < 1/2$ for all  $i\le k$. 
Thus, for a fixed $\mu$,  we can solve for $x^*_i(\mu)$ as
\begin{equation*}\label{xalpha}
x^*_i(\mu) = \frac{1}{2}\left (1 - \sqrt{1 - 4\mu v_i}\right ). 
\end{equation*}
It now suffices to show that there is a unique positive $\mu$ such that 
$
\sum^k_{i=1} x^*_i(\mu) = 1
$. 
Define
\begin{equation*}
\psi(\mu) := \sum^k_{i=1} x^*_i(\mu);  
\end{equation*}
note that $\psi(0) = 0$, and moreover, $\psi(\mu)$ is strictly monotonically increasing in~$\mu$ as long as 
$
(1 - 4\mu v_i)\ge 0
$,   for all $i=1,\ldots,k$.  
Without loss of generality, we assume that $v_1 \ge v_i$ for all $i> 1$. Let  
$
\mu_1 := 1/(4v_1)
$; 
it then suffices to show that  
$
\psi(\mu_1) >1
$.

Let $w_i:= v_i/v_1$; then,  $\psi(\mu_1)$ can be expressed as follows:
\begin{equation*}
\psi(\mu_1) = \frac{k}{2} - \frac{1}{2}\sum^k_{i=2}\sqrt{1 - w_i}.   
\end{equation*}
We then consider the 
following optimization problem:
\begin{equation*}
\begin{array}{l}
\max \sum^{k}_{i=2}\sqrt{1 - w_i}  \vspace{3pt}\\
\text{s.t. } 0 \le w_i \le 1  \hspace{5pt}
\text{ and } \hspace{5pt} \sum^{k}_{i=2} w_i  \ge  2. 
\end{array}
\end{equation*}
Note that the first constraint $0 \le w_i \le 1$ comes from the assumption that $v_i \le v_1$, and the second constraint $\sum^{k}_{i=2} w_i  \ge  2$ follows from~\eqref{vileq}; indeed, we have 
$$
\sum^{k}_{i=2} w_i  = (1 - v_1)/v_1 \ge 2. 
$$
Since the function $\sum^{k}_{i=2}\sqrt{1 - w_i}$ is strictly concave in $w_i$, the maximum is achieved uniquely at
\begin{equation*}
w_i  = \frac{2}{k-1},\hspace{10pt} \forall i=2,\ldots,k. 
\end{equation*}
We recall that the cardinality of the root set $V_r$ is at least three, i.e., $k \ge 3$, and hence $$w _i = \frac{2}{k-1} \le 1,$$ which satisfies the constraint that $0\le w_i \le 1$.  
%
With the above values of $w_i$, for $i = 2,\ldots, k$, we have
\begin{equation}\label{keyeq}
\max \sum^{k}_{i=2}\sqrt{1 - w_i}  =  (k -1) \sqrt{1 - \frac{2}{k-1}} < (k-2).
\end{equation}
The last inequality in~\eqref{keyeq} holds because
\begin{equation*}
1 - \frac{2}{k-1} < 1 - \frac{2}{k-1} + \frac{1}{(k-1)^2}  = \left (\frac{k-2}{k-1}\right)^2.
\end{equation*}
Following~\eqref{keyeq}, we then have 
\begin{equation*}
\psi(\mu_1) > \frac{k}{2} - \frac{1}{2} (k-2)  = 1. 
\end{equation*}
which completes the proof.
\end{proof}

Note that in the case when $G$ is  strongly connected, the set of equilibrium points of system~\eqref{MODEL} coincides with the set of equilibrium points of the opinion system studied in \cite{Bullo1,Bullo5}, though with a completely different dynamical system.  

It now remains to show that $x^*$ is exponentially stable. We first have some preliminaries. 

\begin{Definition}[Induced graphs] 
Let $A\in \bb{R}^{n\times n}$ be an infinitesimally stochastic matrix. We say a digraph $G_A = (V,E_A)$ of $n$ vertices is {\bf induced by $A$} if the edge set $E_A$ satisfies the following condition: $i\to j\in E_A$ if and only if the $ij$-th entry of $A$ is positive. 
\end{Definition}

It is well known that if $G_A$ is rooted, then $A$ has zero as a simple eigenvalue and all the other eigenvalues have negative real parts (see, for example, \cite{ren2005consensus,luc04}). With the notions above, we establish the following result:

\begin{lem}\label{esie}
Let $x^*$ be the non-vertex equilibrium point of system~\eqref{MODEL}. Then, $x^*$ is exponentially stable.
\end{lem}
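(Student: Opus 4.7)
The plan is to linearize~\eqref{MODEL} at $x^*$ and apply Lyapunov's indirect method on the affine hull of $\Sp[V]$. Differentiating the right-hand side $f(x)=C^{\top}(I-X)x$ yields the Jacobian
$$
J(x) \;=\; C^{\top}D(x), \qquad D(x):=\diag(1-2x_1,\ldots,1-2x_n),
$$
and the bounds $0\le x^*_i<1/2$ established in Lemma~\ref{EQUlem2} guarantee that $D(x^*)$ is positive definite, hence invertible. The identity $C\mathbf{1}=0$ gives $\mathbf{1}^{\top}J(x^*)=0$, so the simplex tangent space
$$T := \{\delta\in\R^n \mid \mathbf{1}^{\top}\delta = 0\}$$
is invariant under $J(x^*)$, and it suffices to show that the restriction $J(x^*)|_T$ is Hurwitz.

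The crucial observation is that the transpose $J(x^*)^{\top}=D(x^*)C$ is itself an infinitesimally stochastic matrix whose induced graph coincides with $G$. Indeed, its row sums $D_{ii}(x^*)\sum_j C_{ij}$ all vanish, while its off-diagonal entries $D_{ii}(x^*)C_{ij}$ are nonnegative and are strictly positive exactly when $i\to j\in E$. Since $G$ is rooted by Assumption~\ref{asmp:keyasmp}, the spectral fact recalled just before this lemma gives that $D(x^*)C$ has $0$ as a simple eigenvalue while all remaining eigenvalues have strictly negative real parts. Transposition preserves spectrum, so the same conclusion holds for $J(x^*)$.

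It remains to verify that the single zero eigenvalue of $J(x^*)$ lies outside~$T$. Using the vector $v=(v',0)$ from the discussion preceding Lemma~\ref{EQUlem2}, which satisfies $C^{\top}v=0$ and has $v'$ componentwise positive, I set $u:=D(x^*)^{-1}v$ and compute $J(x^*)u = C^{\top}v = 0$, so $u$ spans $\ker J(x^*)$. On the other hand,
$$
\mathbf{1}^{\top}u \;=\; \sum_{i\in V_r}\frac{v'_i}{1-2x^*_i}\;>\;0,
$$
so $u\notin T$. Therefore $\R^n$ decomposes as $T$ plus the one-dimensional $J(x^*)$-invariant subspace spanned by $u$, and $J(x^*)|_T$ carries precisely the $n-1$ nonzero eigenvalues of $J(x^*)$, all with strictly negative real parts. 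Lyapunov's indirect method then delivers the exponential stability of~$x^*$. The only subtle step is the identification of $D(x^*)C$ as an infinitesimally stochastic matrix whose induced graph is $G$; although $J(x^*)$ itself is not of Laplacian type in an obvious way, this transposition trick lets the rooted-graph spectrum result be invoked verbatim, delivering the entire spectrum of the Jacobian.
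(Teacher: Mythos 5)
Your proof is correct and takes essentially the same route as the paper's: compute $J(x^*)=C^{\top}(I-2X^*)$, observe that $J(x^*)^{\top}=(I-2X^*)C$ is an infinitesimally stochastic matrix whose induced graph is $G$, invoke the rooted-graph spectral fact to get a simple zero eigenvalue with all others in the open left half-plane, and restrict to the tangent space $\{\delta\in\R^n \mid \mathbf{1}^{\top}\delta=0\}$. Your explicit check that the kernel direction $u=D(x^*)^{-1}v$ satisfies $\mathbf{1}^{\top}u>0$ and hence lies outside the tangent space is a small refinement of a step the paper leaves implicit, but it does not constitute a different approach.
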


\begin{proof}
Let $J(x^*)$ be the Jacobian matrix of the vector field $f(x)$ at $x^*$, i.e.,
\begin{equation*}
J(x^*):= \frac{\partial f(x^*)}{\partial x}. 
\end{equation*}
Then, by a direct computation, 
\begin{equation*}
J(x^*):= C^{\top}(I - 2X^*).
\end{equation*} 
Since $x^*\in Q_r$, we have
\begin{equation*}
1 - 2x^*_i \ge 1 - 2 \alpha = 1/(4n - 6), \hspace{10pt} \forall i\in V, 
\end{equation*}
with $\alpha$ defined in~\eqref{alpha}. 
In particular, each diagonal entry of the diagonal matrix $(I - 2X^*)$ is positive. Let $\mathbf{1}\in \bb{R}^n$ be a vector of all ones; then, we have  
$$
J(x^*)^\top \mathbf{1}= (I - 2X^*)C\mathbf{1} = 0,
$$
and hence $J(x^*)^\top$ is an infinitesimally stochastic matrix. Moreover, the digraph induced by $J(x^*)^{\top}$ is the same as the digraph induced by $C$, which is $G$. Since $G$ is rooted, $J(x^*)$ has zero as a simple eigenvalue and all the other eigenvalues of $J(x^*)$ have negative real parts. Let $L$ be a linear subspace of $\R^n$ perpendicular to the vector $\mathbf{1}$: 
\begin{equation*}
L := \big \{v\in \mathbb{R}^n \mid \langle v,\mathbf{1}\rangle = 0\big\}.
\end{equation*} 
Then,  $L$ can be viewed as the tangent space of $\Sp[V]$ at $x$ for all $x\in \Sp[V]$. Note that $L$ is invariant under $J(x^*)$, i.e., for any vector $v\in L$, we have 
$
J(x^*) v\in L
$; indeed, we have 
$$
\langle J(x^*) v, \mathbf{1} \rangle  = \langle v, J(x^*)^\top \mathbf{1} \rangle = 0.
$$
This, in particular, implies that the eigenvalues of $J(x^*)$ have negative real parts when restricted to $L$, and hence the equilibrium point $x^*$ of system~\eqref{MODEL} is exponentially stable.  
\end{proof}

Proposition~\ref{EQUI} is then established by combining Lemmas~\ref{EQUlem2} and~\ref{esie}.

\subsection{System convergence over $Q_r$}\label{ssec:VC} 
In this subsection, we establish the convergence for a special class of trajectories $x(t)$ of system~\eqref{MODEL}. These are the trajectories whose initial conditions are in the positive invariant set $Q_r$ defined in~\eqref{DefQr}. Precisely, we have the following:


\begin{pro}\label{convQr}
Suppose that the initial condition $x(0)$ of system~\eqref{MODEL} is in $Q_r$; then, the trajectory $x(t)$ converges to the non-vertex equilibrium point $x^*$. 
\end{pro}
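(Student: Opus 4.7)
My plan is to reduce system~\eqref{MODEL}, when restricted to the positive invariant set $Q_r$, to a time-varying weighted linear consensus dynamics by a change of variables built from the left nullvector $v$ of $C$, and then invoke a standard consensus convergence result to identify the limit of $x(t)$ with the equilibrium $x^*$ from Lemma~\ref{EQUlem2}.

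Concretely, on $Q_r$ the coordinates satisfy $x_i(t) \in [0,\alpha]$ for $i\in V_r$ and $x_i(t)=0$ for $i\notin V_r$, so $1-2x_i(t)\ge 1-2\alpha>0$ throughout. Let $v$ be the scaled left eigenvector of $C$ introduced just above Lemma~\ref{EQUlem2}; it satisfies $v_i>0$ for $i\in V_r$, $v_i=0$ otherwise, and the component-wise identity $v_i = \sum_{j\in V_i^+} c_{ji} v_j$. For each $i\in V_r$, define
\[
r_i(t) := \frac{(1-x_i(t))x_i(t)}{v_i}.
\]
Substituting $\dot x_i$ from~\eqref{MODEL} into $\dot r_i = (1-2x_i)\dot x_i/v_i$ and applying the identity for $v_i$ gives
\[
\dot r_i = (1-2x_i)\sum_{j\in V_i^+} a_{ij}(r_j - r_i), \qquad a_{ij} := \frac{c_{ji} v_j}{v_i},
\]
with $a_{ij}\ge 0$ and $\sum_{j\in V_i^+} a_{ij}=1$. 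Since outgoing neighbors of roots are roots (as noted in the proof of Corollary~\ref{spvr}), $V_i^+\subseteq V_r$ for every $i\in V_r$, so the $r$-equations form a closed system on $V_r$.

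The resulting system is a time-varying weighted consensus dynamics whose underlying communication graph on $V_r$ is $G_r$, which is strongly connected, and whose gains $1-2x_i(t)$ lie in the fixed interval $[1-2\alpha, 1]$ and are therefore uniformly bounded above and uniformly bounded away from zero. Standard consensus results for such dynamics (see, e.g., \cite{luc04}) then give $r_i(t)\to r^*$ for some common $r^*\ge 0$. Because $x_i\mapsto (1-x_i)x_i$ is a diffeomorphism $[0,1/2)\to [0,1/4)$, this forces $x_i(t)\to \hat x_i := (1-\sqrt{1-4v_i r^*})/2$ for $i\in V_r$ and $\hat x_i=0$ otherwise. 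The normalization $\sum_i \hat x_i=1$ pins down $r^*$ uniquely, and comparison with the formulas derived in the proof of Lemma~\ref{EQUlem2} identifies $\hat x$ with the unique non-vertex equilibrium $x^*$. Hence $x(t)\to x^*$.

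The derivation of the consensus form is routine; the genuine content is the convergence of the time-varying consensus. The main obstacle, should one wish to avoid citing external results, is to prove this convergence in a self-contained way. A natural route is to use the non-smooth Lyapunov function $V(r) := \max_{i\in V_r} r_i - \min_{i\in V_r} r_i$: at any time and any index $i^*$ achieving the maximum one has $\dot r_{i^*} = (1-2x_{i^*})\sum_j a_{i^*j}(r_j - r_{i^*}) \le 0$, and symmetrically at the minimizer, so $V$ is non-increasing; then a propagation argument along paths in $G_r$, exploiting both the strong connectivity and the uniform lower bound $1-2\alpha$ on the gains, shows $V(t)\to 0$ exponentially, which delivers consensus.
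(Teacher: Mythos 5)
Your proposal follows essentially the same route as the paper: the paper sets $y_i := (1-x_i)x_i/\big((1-x^*_i)x^*_i\big)$ and, since the equilibrium equation gives $(I-X^*)x^* = \mu v$, your normalization $r_i = (1-x_i)x_i/v_i$ is the same change of variables up to the constant $\mu$, your coefficients $a_{ij} = c_{ji}v_j/v_i$ coincide with the paper's $\widetilde c_{ji}$, and both arguments then reduce to a time-varying consensus process with gains $1-2x_i(t)\in[1-2\alpha,1]$ over a strongly connected graph, cite~\cite{luc04} for convergence of the transition matrix, and invert $(1-x)x$ on $[0,\alpha]$ to recover convergence of $x(t)$ (the paper concludes via uniqueness of the non-vertex equilibrium in $Q_r$, you via the explicit formula of Lemma~\ref{EQUlem2}; these are equivalent).

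One step in your write-up is wrong as stated, though easily repaired. You claim that since outgoing neighbors of roots are roots, $V_i^+\subseteq V_r$ for every $i\in V_r$; but in the paper's notation $V_i^+$ is the set of \emph{incoming} neighbors, and incoming neighbors of a root need not be roots (a non-root vertex must have a path into $V_r$, so when $V_r\neq V$ some root necessarily has a non-root incoming neighbor). The closure of the $r$-dynamics on $V_r$ therefore does not follow from the graph structure alone. The correct justification, which the paper uses implicitly by writing the dynamics with $V^+_{r,i} = V^+_i\cap V_r$, is that $Q_r\subseteq\Sp[V_r]$ is positive invariant (Corollaries~\ref{spvr} and~\ref{DefQ}), so $x_j(t)\equiv 0$ for all $j\notin V_r$ along the trajectory and the corresponding terms $c_{ji}(1-x_j)x_j$ vanish identically; equivalently, $v_j=0$ for $j\notin V_r$ annihilates those terms in your weighted form. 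With that substitution your argument is sound, including the self-contained $\max$--$\min$ Lyapunov fallback, which would replace the citation the paper relies on.
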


We establish Proposition~\ref{convQr} below. First, recall that we have labeled the vertices of $G$ such that the root set $V_r$ is comprised of the first $k$ vertices. Let 
\begin{equation*}
x' := (x_1,\ldots.x_k)
\end{equation*} 
and similarly, define $$X' := \diag(x_1,\ldots,x_k).$$ Note that if $x\in Q_r$, then $x = (x',0)$.  Since $Q_r$ is positive invariant,  the dynamics of $x'$ is simply given by  
\begin{equation}\label{subsys}
\dot{x}'  = C^{\top}_{11} (I - X') x'
\end{equation}  
with $C_{11}$ the $k$-by-$k$ block matrix defined in~\eqref{blockc}. 
To prove Proposition~\ref{convQr}, it suffices to show that for any initial condition $x'(0)$, the solution $x'(t)$ of system~\eqref{subsys} converges to $x'^* := (x^*_1,\ldots,x^*_k)$. 

To do so, we first make a change of variables for system~\eqref{subsys}. Recall that $G_r = (V_r,E_r)$ is the subgraph induced by the root set $V_r$. 
We now 
define a set of new variables  $y_i$'s as follows:
\begin{equation*}
y_i :=  \frac{(1-x_i)x_i}{(1-x^*_i)x^*_i}, \hspace{10pt} \forall i\in V_r. 
\end{equation*} 
This is well defined because, from Proposition~\ref{EQUI}, we have $x^*_i\in(0, \alpha]$, and hence the denominator is nonzero. 
We consider below the dynamics of $y_i$, for $i\in V_r$: 
First, we define a set of new coefficients $\widetilde c_{ji}$ as follows:
\begin{equation}\label{widetilde cji}
\widetilde c_{ji} := c_{ji} \frac{(1 - x^*_j)x^*_j}{(1 - x^*_i)x^*_i}, \hspace{10pt} \forall i\to j\in E_r. 
\end{equation} 
For a vertex $i\in V_r$, denote by $V^+_{r,i}$ the set of incoming neighbors of $i$ in $G_r$. It should be clear that $$V^+_{r,i} = V^+_i\cap V_r.$$  
With the notations above, we can express the evolution equations of $y_i$, for $i\in V_r$, as follows:
\begin{equation}\label{ysys1}
\dot{y}_i = (1 - 2x_i) \left (-y_i + \sum_{j\in V^+_{r,i}} \widetilde c_{ji}\, y_j\right ), \hspace{10pt} \forall i\in V_r. 
\end{equation} 

We further describe below some relevant properties of system~\eqref{ysys1}. First, note that $x^*\in Q_r$ is an equilibrium point of system~\eqref{MODEL}, and hence
$$
C^{\top}_{11} (I - X'^*) x'^* = 0,
$$ 
which implies that
\begin{equation}\label{eq:9:4610122015}
(1 - x^*_i)x^*_i = \sum_{j\in V^+_{r,i}} c_{ji} (1 - x^*_j)x^*_j, \hspace{10pt} \forall i\in V_r.  
\end{equation}
Combining~\eqref{widetilde cji} and~\eqref{eq:9:4610122015}, we have that the coefficients $\widetilde c_{ji}$, for $i\to j\in E_r$, satisfy the following condition:
\begin{equation}\label{tildec}
\sum_{j\in V^+_{r,i}} \widetilde c_{ji} = \sum_{j\in V^+_{r,i}} c_{ji} \frac{(1 - x^*_j)x^*_j}{(1 - x^*_i)x^*_i}= 1, \hspace{10pt} \forall i\in V_r. 
\end{equation}
Also, note that $Q_r$ is positive invariant for system~\eqref{MODEL}. Thus, if $x(0)\in Q_r$, then $x(t) \in Q_r$ for all $t \ge 0$; in particular, this implies that
\begin{equation}\label{lowerbd}
1 - 2x_i(t) \ge 1 - 2\alpha >0, \hspace{10pt} \forall \, t \ge 0    \mbox{ and } \forall \, i\in V_r.
\end{equation}
for $\alpha$ defined in~\eqref{alpha}.

Combining~\eqref{ysys1} with~\eqref{tildec} and~\eqref{lowerbd}, we recognize that the dynamics of $y_i$ is a continuous-time consensus process scaled by a time-varying, positive factor $(1 - 2x_i(t))$.   
Now,  define a $k$-by-$k$ infinitesimally stochastic matrix $A(x')= (a_{ij}(x'))$ as follows: for $i\neq j$, let
\begin{equation}\label{eq:A(x)induced}
 a_{ij}(x'): =
\left\{
\begin{array}{ll}
(1 - 2x_i)\widetilde c_{ji}   &  \text{if }j\to i\in E_r\\
0 & \text{otherwise.} 
\end{array}
\right. 
\end{equation} 
The diagonal entries of $A(x')$ are set to be $-(1 - 2x_i)$, for $i\in V_r$. With the matrix $A(x')$, we can rewrite system~\eqref{ysys1} into the following matrix form:
\begin{equation}\label{ysys2}
\dot{y} = A(x') y. 
\end{equation}
With the definitions and notations above, we next prove Proposition~\ref{convQr}:  

\begin{proof}[Proof of Proposition~\ref{convQr}]
We first establish the convergence of system~\eqref{ysys2}, and then show that it implies the convergence of sysem~\eqref{subsys}. 

Fix an initial condition $x(0)\in Q_r$, and let $x(t)$ be the solution of system~\eqref{MODEL}. Then, system~\eqref{ysys2} can be viewed as a time-varying linear consensus process, with $A(x(t))$ (or simply written as $A(t)$) the time-varying infinitesimally stochastic matrix, i.e., 
\begin{equation}\label{ysys3}
\dot y = A(t) y. 
\end{equation}
Moreover, following~\eqref{lowerbd}, we obtain
\begin{equation}\label{lowerbd1}
a_{ij}(t) \ge (1 - 2\alpha)\,\widetilde c_{ji} , \hspace{10pt} \forall t\ge 0 \mbox{ and } \forall j\to i\in E_r. 
\end{equation}
Now, consider the digraph induced by $A(t)$: Let $$\widetilde{G}_r = (V_r,\widetilde E_r)$$ be defined by reversing directions of edges of $G_r$, i.e., $$i\to j\in \widetilde E_r \hspace{5pt} \mbox{ if and only if } \hspace{5pt} j\to i\in E_r.$$ Then, from~\eqref{lowerbd} and~\eqref{eq:A(x)induced}, $\widetilde G_r$ is the digraph induced by $A(t)$. Furthermore, since $G_r$ is strongly connected, so is $\widetilde{G}_r$. In summary, we have that 
\begin{itemize}
\item[1.] The induced digraph of $A(t)$ is $\widetilde G_r$ for all $t\ge 0$, and $\widetilde G_r$ is strongly connected.
\item[2.] Following~\eqref{lowerbd1}, there is a uniform lower bound $\delta > 0$ such that 
$$
a_{ij}(t) \ge \delta, \hspace{10pt} \forall t\ge 0 \mbox{ and } \forall i\to j\in \widetilde E_r.
$$
\end{itemize}
Then, it is known from \cite{luc04} that the transition matrix $\Phi(t)$ of the linear system~\eqref{ysys3} converges to a rank one matrix:   
\begin{equation*}
\lim_{t\to\infty}\Phi(t) = \mathbf{1} \cdot u^\top 
\end{equation*}
where $\mathbf{1}\in \bb{R}^k$ is a vector of all ones, and $u$ is a vector in the $(k-1)$-simplex. Hence, the solution $y(t)$ also converges: 
\begin{equation*}
\lim_{t\to \infty}y(t) = \langle u, y(0)\rangle \cdot \mathbf{1}. 
\end{equation*}
It suffices to show that the convergence of $y(t)$ implies the convergence of $x'(t)$ of system~\eqref{subsys}. To see this, first note that $Q_r$ is positive invariant, and hence, 
\begin{equation*}
x_i(t) \in [0,\alpha], \hspace{10pt} \forall t \ge 0 \mbox{ and } \forall i\in V_r. 
\end{equation*} 
Since $\alpha< 1/2$,  the function $(1 - x)x$ is invertible when restricted to the closed interval $[0,\alpha]$; indeed, $(1 - x)x$ is strictly monotonically  increasing when restricted to $[0,\alpha]$.  
Thus, if $y_i(t) = (1 - x_i(t))x_i(t)$  converges, then so does $x_i(t)$, for all $i\in V_r$.  On the other hand, from Proposition~\ref{EQUI}, there is only one equilibrium point $x^*$ in $Q_r$; we thus conclude that $x'(t)$ converges to $x'^*$, and hence $x(t)$ converges to $x^*$. This completes the proof. 
\end{proof}

\subsection{Proof of Theorem~\ref{MAIN}}\label{ssec:VD}

We provide in this subsection a complete proof of Theorem~\ref{MAIN}. We first recall that $\alpha$ is a scalar (defined in~\eqref{alpha}) given by
$$
\alpha = \frac{1}{2} - \frac{1}{4(2n - 3)},
$$
We also recall that $Q$, $Q_r$ are subsets of $\Sp[V]$, given by
$$
Q = \{x \in \Sp[V] \mid x_i \le \alpha, \, \forall\, i\in V\}
$$
and 
$$
Q_r = Q\cap \Sp[V_r].
$$
To prove Theorem~\ref{MAIN}, we further need the following: for a vector $v = (v_1,\ldots, v_n)\in \R^n$, let the {\it one-norm} of $v$ be defined as follows: $$\|v\|_1 := \sum^n_{i = 1} |v_i|;$$
then, the following holds:

{ \color{black}
\begin{lem}\label{lem:11:28am}
Let $x$ be in $Q$ with $\sum_{i\notin V_r} x_i \le \epsilon$, for $\epsilon \le 1/4$; then, there exists a point $x'$ in $Q_r$ such that
$$
\|x - x'\|_1 \le 2\epsilon. 
$$ \, 
\end{lem}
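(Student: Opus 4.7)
The plan is to construct $x'$ explicitly from $x$ by zeroing out the non-root coordinates and redistributing their total mass among the root coordinates, while preserving the constraint $x'_i \le \alpha$.

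Let $\delta := \sum_{i \notin V_r} x_i \le \epsilon$, so that $\sum_{i \in V_r} x_i = 1 - \delta$. I would set $x'_i = 0$ for every $i \notin V_r$, and for $i \in V_r$ I would choose $x'_i = x_i + \eta_i$ where the $\eta_i \ge 0$ satisfy $\sum_{i \in V_r} \eta_i = \delta$ and $x_i + \eta_i \le \alpha$. Since the original $x$ satisfies $x_i \le \alpha$ (because $x \in Q$), the amount of ``slack'' available at each root is $\alpha - x_i \ge 0$. Such an $\eta$ exists provided the total slack $\sum_{i \in V_r}(\alpha - x_i) = |V_r|\alpha - (1-\delta)$ is at least $\delta$, i.e., provided $|V_r|\alpha \ge 1$. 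Assuming existence, the explicit bound
\[
\|x - x'\|_1 = \sum_{i \notin V_r} x_i + \sum_{i \in V_r} \eta_i = \delta + \delta = 2\delta \le 2\epsilon
\]
is then immediate, and $x' \in Q_r$ by construction.

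The only nontrivial step is the capacity estimate $|V_r|\alpha \ge 1$. Recall from the discussion preceding equation~\eqref{eq:defVr1-k} that the assumption every vertex has at least two outgoing neighbors, combined with the fact that outgoing neighbors of a root are roots, forces $|V_r| \ge 3$. It therefore suffices to check that $3\alpha \ge 1$ with $\alpha = \tfrac{1}{2} - \tfrac{1}{4(2n-3)}$. This reduces to $\tfrac{3}{4(2n-3)} \le \tfrac{1}{2}$, i.e., $2n-3 \ge \tfrac{3}{2}$, which is true for every $n \ge 3$ (and the case $n < 3$ does not arise since $|V_r| \ge 3$ implies $n \ge 3$).

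Existence of an admissible $\eta$ under the capacity bound is a standard water-filling argument: process the roots in any order, at each root~$i$ set $\eta_i := \min\{\alpha - x_i, \, \delta_{\mathrm{rem}}\}$, where $\delta_{\mathrm{rem}}$ is the as-yet-unallocated portion of $\delta$. Since the total capacity $|V_r|\alpha - (1-\delta) \ge 3\alpha - 1 \ge 0$... actually one needs this to exceed $\delta$, which is exactly $|V_r|\alpha \ge 1$ as shown above, so the procedure terminates having distributed all of $\delta$. I expect no further obstacles; the entire argument is elementary once the capacity inequality is in hand, and the factor $2$ in the conclusion reflects precisely the two sources of $\ell^1$ deviation (mass removed from non-roots and mass added to roots).
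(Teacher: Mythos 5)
Your proof is correct and follows essentially the same route as the paper's: zero out the non-root coordinates, transfer the mass $\delta = \sum_{i\notin V_r} x_i$ onto the root coordinates using the slack $\sum_{i\in V_r}(\alpha - x_i)$ guaranteed by $|V_r|\ge 3$ and the formula for $\alpha$, and read off $\|x-x'\|_1 = 2\delta \le 2\epsilon$. The only (minor) difference is that the paper bounds the slack below by $1/4$ and invokes $\epsilon \le 1/4$, whereas your capacity condition $|V_r|\alpha \ge 1$ accounts for the $+\delta$ term in the slack and thus shows the hypothesis $\epsilon \le 1/4$ is not actually needed for this step.
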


\begin{proof}
We first show that 
\begin{equation}\label{eq:11:49am10122015}  
\sum_{i\in V_r} (\alpha - x_i) \ge 1/4 
\end{equation}
This holds because $\sum_{i\in V_r} x_i \le 1$, and 
$$
\sum_{i\in V_r} \alpha \ge 3\alpha = \frac{3}{2} - \frac{3}{4(2n - 3)}\ge \frac{5}{4}. 
$$
The first (resp. last) inequality holds because $|V_r| \ge 3$ (resp. $n \ge 3$). Since $\epsilon \le 1/4$, from~\eqref{eq:11:49am10122015}, there exists a vector $x' \in Q_r$ such that $x'_i \ge x_i$ for all $i \in V_r$, and moreover,
$$
\sum_{i\in V_r} (x'_i - x_i) = \sum_{i\notin V_r} x_i. 
$$
It then follows that 
$$
\|x' - x \|_1 = 2\sum_{i\notin V_r} x_i \le 2\epsilon,
$$
which completes the proof.
\end{proof}
}
We are now in a position to prove Theorem~\ref{MAIN}:  

\begin{proof}[Proof of Theorem~\ref{MAIN}]
Parts~1 and~2 of Theorem~\ref{MAIN} are established by Corollary~\ref{spvvv} and Proposition~\ref{EQUI}, respectively. We prove here Part~3.

First, we show that there is an open set $U$ in $\Sp[V]$ containing $Q_r$ such that if $x(0) \in U$, then the solution $x(t)$ of system~\eqref{MODEL} converges to $x^*$. 
By Proposition~\ref{EQUI}, the non-vertex equilibrium point $x^*$ is exponentially stable. Thus, there exists an open neighborhood $U_{x^*}$ of $x^*$ in $\Sp[V]$ such that if $x(0)\in U_{x^*}$, then 
$x(t)$ converges exponentially fast to $x^*$. 
Now choose a point $x\in Q_r$. From Proposition~\ref{convQr}, we know that there exists an instant $T\ge 0$ such that if $x(0) = x$, then $x(t)\in U_{x^*}$ for all $t\ge T$. This, in particular, implies that there exists an open neighborhood $U_{x}$ of $x$ in $\Sp[V]$ such that if $x(0) \in U_{x}$, then $x(T) \in U_{x^*}$.    
Now, define
\begin{equation*}
U := \cup_{x\in Q_r} U_{x}.
\end{equation*}
Then, $U$ is the desired open set of $Q_r$ in $\Sp[V]$. 

{ \color{black}
Next, we show that there exists an $\epsilon > 0$ such that if $x$ lies in $Q$ and satisfies 
$
\sum_{i\notin V_r} x_i \le \epsilon 
$, then $x \in U$.  
Since $Q_r$ is a compact subset and $U$ is an open set containing $Q_r$, there exists an $\epsilon' > 0$ such that the {\it $\epsilon'$-neighborhood} of $Q_r$ in $\Sp[V]$ is contained in $U$. Specifically, let 
$$
U' := \{ x \in \Sp[V]  \mid \|x - x'\|_1 < \epsilon' \mbox{ for some } x'\in Q_r\};  
$$ 
then, $U' \subset U$. Now, choose $\epsilon$ positive but sufficiently small such that
$$
\epsilon \le \min\{\epsilon'/2, 1/4\}.
$$ 
Then, from Lemma~\ref{lem:11:28am}, we have that for any $x\in Q$  with $\sum_{i\notin V_r} x_i \le \epsilon$, there exists a $x'\in Q_r$ such that
$$
\|x - x'\| \le 2\epsilon \le \epsilon',
$$
which implies that $x\in U'\subset U$. 

We now show that any trajectory $x(t)$ of system~\eqref{MODEL} intersects the open set $U$. Note that if this is the case, then by using the arguments as before, we know that $x(t)$ converges to $x^*$. 
To establish the statement, we first note that from Corollary~\ref{DefQ}, there is an instant $T_1$ such that $x(t)\in Q$ for all $t\ge T_1$. We also note that from Proposition~\ref{pro31}, there exists an instant $T_2$ such that 
$$
\sum_{i\notin V_r} x_i(t) \le \epsilon, \hspace{10pt} \forall t\ge T_2.  
$$
Thus, if we let $T:= \max\{T_1,T_2\}$, then $x(t) \in Q$ and $\sum_{i\notin V_r} x_i(t) \le \epsilon$ for all $t \ge T$, which implies that 
$$
x(t) \in U, \hspace{10pt} \forall t\ge T. 
$$
This completes the proof. 
}
\end{proof}

\section{Conclusions}
In this paper, we have introduced a continuous-time model whereby a number of agents in a social network are able to evaluate their self-appraisals over time in a distributed way. We have shown that under a particular assumption (Assumption~\ref{asmp:keyasmp}),  the solution of the system converges to the unique non-vertex equilibrium point $x^*$ as long as the initial condition is not a vertex, and moreover, $x^*$ is exponentially stable. This stable equilibrium point can be interpreted as the steady state of the self-appraisals of the agents, and we have related the value of each $x^*_i$ to the  values of $c_{ji}$,  as  described by Proposition~\ref{VR} and Corollary~\ref{R1}. 

Future work may focus on the case where each $c_{ij}$ is time-variant. For example, we can assume that each $c_{ij}$ also depends on the self-appraisal of $x_j$, e.g., how much opinion agent~$i$ accepts from agent~$j$ now depends on how influential agent~$j$ is in the social network. Also we note that this model can be developed into many other interesting problems. For example, a question related to sparse systems is that given the underlying graph $G$, what is the collection of achievable steady states $x^*$ by the choice of $c_{ij}$'s? A similar question has been asked and answered in the context of the consensus process~\cite{chen2015consensus}. 
Another problem related to optimal control is to assume that there is an agent~$i$ who is able to manipulate her own weights $c_{ij}$, and we ask whether there is a choice of these weights so that self-appraisal of agent~$i$ is maximized? If further, we assume that there are two such players each of whom is trying to maximize her own self-appraisal, then what would be the strategy for each of the players to choose the $c_{ij}$'s? This list of examples indicates that the self-appraisal model has a rich structure which can be investigated under various assumptions and from different perspectives.

\bibliographystyle{unsrt}
\bibliography{ji,CDC2015}

\section*{Appendix}

We prove here Proposition~\ref{pro2}. We first have some definitions. 
A polynomial $m(x_1,\ldots, x_n)$ is said to be a {\bf monomial} if it can be expressed as follows:
 \begin{equation*}
m(x_1,\ldots, x_n)=\Pi^n_{i=1}x^{k_i}_i.
 \end{equation*} 
 with $k_i\ge 0$ for all $i=1,\ldots,n$. The {\bf degree} of the monomial $m(x_1,\ldots, x_n)$ is defined to be $\sum^n_{i=1} k_i$.  A polynomial $p(x_1,\ldots, x_n)$ can be uniquely expressed as a linear combination of monomials. We say a monomial is {\it contained in $p$} if the corresponding coefficient is nonzero. Furthermore, the degree of $p$ is defined to be the highest degree of a monomial contained in $p$. 
With these definitions at hand, we have the following result:

\begin{lem}\label{prolem24}
Let $x(0)$ be any initial condition of system~\eqref{MODEL} in $\Sp[V]$. The time derivative $x_i^{(k)}(0)$ is a degree-$(k+1)$  polynomial in $x_j(0)$, for $j\in S_i(k)$,  and it does not contain a constant term. Moreover, if $D_i(k)$ is nonempty for $k\ge 1$, then the following properties hold:  
\begin{enumerate}
\item 
The only monomials in variables $x_j(0)$, for $j\in D_i(k)$, contained in $x^{(k)}_i(t)$ are 
$$x_j(0) \hspace{5pt} \mbox{ and } \hspace{5pt}  x^2_j(0), \hspace{10pt} \forall\,  j\in D_i(k) .$$
\item The coefficients  of $x_j(0)$ and $x^2_j(0)$, for $j\in D_i(k)$, in $x^{(k)}_i(0)$ are $\alpha_{ji} $ and $-\alpha_{ji} $,   respectively. 
\end{enumerate}\,
\end{lem}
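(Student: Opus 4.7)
I would prove this by induction on $k$. The key observation is that since $\dot x = f(x)$ is polynomial, with $f_j(x) = -x_j + x_j^2 + \sum_{l\in V^+_j} c_{lj}(x_l - x_l^2)$, each derivative $x_i^{(k)}(0)$ equals a polynomial $p_{i,k}$ evaluated at $x(0)$, determined by the Lie recursion
$$p_{i,0}(x) = x_i, \qquad p_{i,k}(x) = \sum_{j} \partial_j p_{i,k-1}(x)\, f_j(x).$$
The base case $k=0$ is immediate, and for $k=1$ one reads off from $\dot x_i$ directly that the only monomials in $D_i(1) = V^+_i$ variables are $x_j$ with coefficient $c_{ji} = \alpha_{ji}$ and $x_j^2$ with coefficient $-c_{ji} = -\alpha_{ji}$.

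For the inductive step, the ``easy'' claims---degree at most $k+1$, no constant term, and all variables in $S_i(k)$---follow routinely: $\partial_j p_{i,k-1}$ has degree at most $k-1$ while $f_j$ has degree $2$ with no constant term, and $f_j$ only involves variables in $\{j\}\cup V^+_j$, which lies in $S_i(k)$ whenever $j\in S_i(k-1)$.

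The substantive work is parts~1--2. Fix $j\in D_i(k) = S_i(k) - S_i(k-1)$. Because $\partial_{j'} p_{i,k-1}$ depends only on $S_i(k-1)$-variables, the variable $x_j$ can enter $p_{i,k}$ only through some $f_{j'}$ with $j'\in S_i(k-1)$ and $j \in \{j'\}\cup V^+_{j'}$; since $j\notin S_i(k-1)$ this forces $j \to j'\in E$, so $j' \in V^-_j\cap S_i(k-1)$, which by Lemma~\ref{prolem25} lies in $D_i(k-1)$. The $x_j$-contribution to $p_{i,k}$ is thus
$$\sum_{j' \in V^-_j \cap D_i(k-1)} c_{jj'}\, \partial_{j'} p_{i,k-1}(x)\,(x_j - x_j^2).$$
Crucially, $f_{j'}$ contains no monomials mixing two distinct variables, so every term in this expression factors as (monomial in $S_i(k-1)$-variables)$\,\cdot x_j^a$ with $a\in\{1,2\}$. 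A monomial supported purely in $D_i(k)$-variables therefore forces the $S_i(k-1)$-factor to be constant, and that constant is the coefficient of $x_{j'}$ in $p_{i,k-1}$, which by the inductive hypothesis equals $\alpha_{j'i}$. Summing and applying~\eqref{eq:recursivedefinitionofalphaji} yields coefficients $\alpha_{ji}$ for $x_j$ and $-\alpha_{ji}$ for $x_j^2$. The absence of other $D_i(k)$-monomials---in particular $x_jx_{j'}$ for distinct $j,j'\in D_i(k)$---follows from the same factored structure, since the second factor in such a product would have to come from an $S_i(k-1)$-variable. The main obstacle is precisely this final bookkeeping: correctly identifying which $j'\in S_i(k-1)$ can deliver an $x_j$ factor (handled by Lemma~\ref{prolem25}) and then extracting the right constant via the recursive definition of $\alpha_{ji}$.
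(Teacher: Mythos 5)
Your proof is correct on all the substantive points and follows essentially the same route as the paper's own proof: induction on $k$ through the chain-rule recursion $x^{(k)}_i = \sum_{j'} \partial_{j'} x^{(k-1)}_i \cdot f_{j'}$, the observation that each $f_{j'}$ contains no monomials mixing distinct variables, so that a monomial supported purely in the $D_i(k)$-variables forces the factor coming from $\partial_{j'} x^{(k-1)}_i$ to be a constant (the paper phrases this as requiring $m_q = x_{i_p}$, i.e., the relevant monomial of $x^{(k-1)}_i$ must be $x_{i_p}$ itself), Lemma~\ref{prolem25} to conclude that the delivering vertex $j'$ lies in $V^-_j \cap S_i(k-1) \subseteq D_i(k-1)$, and the recursion~\eqref{eq:recursivedefinitionofalphaji} to assemble the coefficients $\alpha_{ji}$ and $-\alpha_{ji}$; your explicit dismissal of cross terms $x_j x_{j'}$ for distinct $j,j' \in D_i(k)$ is the same disjointness argument the paper uses implicitly. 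The one item you leave unproved is that the degree of $x^{(k)}_i(0)$ is \emph{exactly} $k+1$ --- you establish only the upper bound, while the lemma asserts equality; the paper closes this with a single extra line, asserting via a direct computation that the monomial $x_i^{k+1}$ is contained in $x^{(k)}_i$ with coefficient $k!$. This is a minor omission rather than a structural gap: the exact degree is never used downstream, since the proof of Proposition~\ref{pro2} relies only on the variable support $S_i(k)$, the absence of a constant term, and items~1 and~2.
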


\begin{proof}
In the proof, we will simply write $x^{(k)}_i$ by omitting the argument. 
The proof will be carried out by induction on $k$.

{\it Base case}. We assume that $k = 1$;  in this case, we have  
$$
x^{(1)}_i = -(1 - x_i)x_i + \sum_{j\in V^+_i} c_{ji}(1 - x_j) x_j.
$$
We recall the fact that 
$$ 
S_i(1) = V^+_i\cup\{i\} \hspace{10pt} \mbox{ and } \hspace{10pt} D_i(1) = V^+_i,$$ 
and the fact that  $$\alpha_{ji} = c_{ji}, \hspace{10pt} \forall \,j\in V^+_i.$$ 
It then follows that Lemma~\ref{prolem24} holds for $k = 1$.

{\it Inductive step}. We assume that Lemma~\ref{prolem24} holds for $(k-1)$, and prove for $ k $. We first show that $x^{(k)}_i$ is a degree-$(k+1)$ polynomial in variables $x_{j}$'s, for $j\in S_i(k)$, and it does not contain a constant term. 

Label the elements of $S_i(k-1)$ as $i_1,\ldots, i_s$. 
By the induction hypothesis, we can write $x^{(k-1)}_i$ as follows:
\begin{equation*}
x^{(k-1)}_i = \sum^t_{q=1} \sigma_q m_q. 
\end{equation*}
Each $m_q$ is a monomial in variables $x_{i_1},\ldots, x_{i_s}$, and $\sigma_q$ is the associated coefficient. The degree of $m_q$ is at most $k$. 
By chain rule, we have
\begin{equation}\label{chainrule}
 x^{(k)}_i = \sum^t_{q=1} \sum^s_{p=1} \sigma_q\frac{\partial m_{q}}{\partial x_{i_p}} f_{i_p}(x).
\end{equation}
Note that each $f_{i_p}(x)$ (defined in~\eqref{eq:fi(x)}) is a degree-$2$ polynomial in variables $x_j$, for  $j\in S_{i_p}(1)$, and it does not contain a constant term. Also, note that since $i_p\in S_i(k-1)$,  each $ S_{i_p}(1) $ is a subset of $S_i(k)$. Thus, by following~\eqref{chainrule}, we know that (i) $x^{(k)}_i$ is a polynomial in variables $x_j$, for $j\in S_i(k)$; (ii) the degree of $x^{(k)}_i$ is at most $(k+1)$; and (iii) $x^{(k)}_i$  does not contain a constant term. 
Further, a direct computation yields that the monomial $x^{k+1}_i$ is contained in $x^{(k)}_i$ with coefficient $k!$.  Thus, the degree of $x^{(k)}_i$ is $(k+1)$.

We now assume that $D_i(k)$ is nonempty, and establish items~1 and~2 of the lemma.
First, note that each ${\partial m_{q}}/{\partial x_{i_p}}$ is a monomial in $x_{j}$, for $j\in S_i(k-1)$. Hence, if $({\partial m_{q}}/{\partial x_{i_p}}) f_{i_p}(x)$ contains a monomial in  $x_{j}$, for $j\in D_i(k)$;  then, we {\it must} have 
\begin{equation}\label{eq:10:03pm10102015}
m_q = x_{i_p} 
\end{equation}
so that ${\partial m_{q}}/{\partial x_{i_p}} = 1$,  
and moreover, $f_{i_p}(x)$ {\it has to} contain a monomial in  $x_{j}$, for $j\in D_i(k)$. 
On the other hand, $f_{i_p}(x)$ is a polynomial in variables $x_j$, for $j\in S_{i_p}(1)$. From Lemma~\ref{prolem25}, if $S_{i_p}(1)$ intersects $D_i(k)$, then
\begin{equation}\label{eq:10:04pm10102015}
i_p \in D_{i}(k-1).
\end{equation}
Combining~\eqref{eq:10:03pm10102015} and~\eqref{eq:10:04pm10102015}, we know that $x^{(k)}_i$ contains at most $x_{j}$ and $x^2_j$, for $j\in D_i(k)$,   as the monomials in $x_j$, for $j\in D_i(k)$.

Now, fix a vertex $j\in D_i(k)$, and let $\sigma'_{j}$ and $\sigma''_j$ be the coefficients of $x_j$ and $x^2_j$ in $x^{(k)}_i$. It suffices to show that $$\sigma'_j = \alpha_{ji} \hspace{10pt} \mbox{ and }  \hspace{10pt} \sigma''_j = -\alpha_{ji}.$$ 
First, from Lemma~\ref{prolem25},  $V^-_j$ intersects $S_i(k-1)$, and their intersection lies in $D_i(k-1)$. We thus define
$$
W: = D_i(k-1) \cap V^-_j.
$$
Then, using the fact that $({\partial m_{q}}/{\partial x_{i_p}}) f_{i_p}(x)$ contains the monomials $x_j$ and $x^2_j$ if and only if $m_q = x_{i_p}$ and $i_p \in W$, we obtain that  
$$
\sigma'_j = -\sigma''_j = \sum_{i_p\in W} c_{ji_p} \sigma_{i_p},  
$$
where $\sigma_{i_p}$ is the coefficient of $x_{i_p}$ in $x^{(k-1)}_i$. 
From the induction hypothesis, we have $\sigma_{i_p} = \alpha_{i_pi}$, and hence 
$$
\sigma'_j = -\sigma''_j = \sum_{i_p\in W} c_{ji_p} \alpha_{i_pi}, 
$$
which is $\alpha_{ji}$ by~\eqref{eq:recursivedefinitionofalphaji}. 
This completes the proof. 
\end{proof}

With Lemma~\ref{prolem24} at hand, we now prove Proposition~\ref{pro2}:

\begin{proof}[Proof of Proposition~\ref{pro2}]
We first prove for the case where there does not exist a supporting path for $i$ at $x(0)$.  
The fact that $i\notin V_r$ follows from Lemma~\ref{lempro23}. To see this, we assume that $i\in V_r$. But then, $S_i = V$, and hence, $x_j(0) = 0$ for all $j\in V$, which contradicts the fact that $x(0)\in \Sp[V]$. 
We now show that 
\begin{equation}\label{solsbsys}
x_j(t) = 0, \hspace{10pt} \forall t\ge 0 \mbox{ and } \forall j\in S_i.
\end{equation}
Note that the dynamics of $x_j$ depends only on $x_k$'s for $k\in S_j(1)$. Since $j\in S_i$, we have  $S_j(1)\subseteq S_i$. This, in particular, implies that the following system: 
\begin{equation}\label{eq:subsyssj}
\dot{x}_j = -(1 - x_j)x_j + \sum_{k\in V^+_j} c_{kj}(1 - x_k)x_k, \hspace{10pt} \forall j\in S_i
\end{equation}
formed by $x_j$'s, for $j\in S_i$, is an isolated system, independent of $x_{j'}$, for $j'\notin S_i$.  Since there is no supporting path for $i$ at $x(0)$,  by Lemma~\ref{lempro23}, we have 
$$x_j(0) = 0, \hspace{10pt} \forall j\in S_i.$$ 
But then, following this initial condition,~\eqref{solsbsys} is the unique solution of system~\eqref{eq:subsyssj}.

We now prove for the case when there exists a critical supporting path for $i$ at $x(0)$ of length $k$. We first show that $x^{(l)}_i(0) = 0$ for all $l <k$, and then show that $x^{(k)}_i(0)>0$. By Lemma~\ref{prolem24}, $x^{(l)}_i(0)$ is a polynomial in $x_j(0)$, for $j\in S_i(l)$, and it does not contain a constant term. Since the length of a critical supporting path for $i$ at $x(0)$ is $k$, we have
$ x_j(0) = 0$ for all $j\in S_i(l)$. 
Thus,  $x^{(l)}_i(0) = 0$ for all $l<k$.

We now show that $x^{(k)}_i(0)>0$. Since there is a critical supporting path for $i$ at $x(0)$, $D_i(k)$ is nonempty. Then, by Lemma~\ref{prolem24}, we have 
\begin{equation*}
x^{(k)}_i(0) = \sum_{j\in D_i(k)} \alpha_{ji} (x_j(0) - x^2_j(0)).
\end{equation*} 
Since $x(0)$ is not a vertex of $\Sp[V]$, we have $x_j(0)<1$ for all $j\in V$. On the other hand, there exists at least a vertex $j\in D_i(k)$ such that $x_{j}(0)>0$. This holds because there exists a critical supporting path for $i$ at $x(0)$ of length $k$.  It then follows that  
$
x^{(k)}_i(0) >0
$, which completes the proof.
\end{proof}

\end{document}